\documentclass[a4,12pt,reqno]{article}
\usepackage{harvard}
\usepackage{graphicx}
\usepackage{amsmath,amssymb,amsthm}

\usepackage[margin=25mm, a4paper, dvips]{geometry}
\usepackage{comment}
\linespread{1.5}
\usepackage{epsfig}

\def\ko{{k^{\circ}}}
\def\lexpB{\mathfrak{e}}
\def\alpn{\rho}
\def\losst{\varrho}

\def\TT{n}

\def\kullb{\cc{K}} 

\def\E{I\!\!E}
\def\P{I\!\!P}

\def\thetav{\bb{\theta}}
\def\thetavs{\bb{\theta}_{0}}

\def\kappa{\varkappa}
\def\zz{\mathfrak{z}}

\def\Crlp{\mathfrak{R}}

\def\K{K}

\renewcommand{\Gamma}{\varGamma}
\renewcommand{\Pi}{\varPi}
\renewcommand{\Sigma}{\varSigma}
\renewcommand{\Delta}{\varDelta}
\renewcommand{\Lambda}{\varLambda}
\renewcommand{\Psi}{\varPsi}
\renewcommand{\Phi}{\varPhi}
\renewcommand{\Theta}{\varTheta}
\renewcommand{\Omega}{\varOmega}
\renewcommand{\Xi}{\varXi}
\renewcommand{\Upsilon}{\varUpsilon}
\renewcommand{\(}{$\,}
\renewcommand{\)}{\,$}
\usepackage[mathscr]{eucal}

\newcommand{\cc}[1]{\mathscr{#1}}
\newcommand{\bb}[1]{\boldsymbol{#1}}

\renewcommand{\hat}[1]{\widehat{#1}}
\renewcommand{\tilde}[1]{\widetilde{#1}}
\newcommand{\nn}{\nonumber \\}

\newcommand{\argmax}{\operatornamewithlimits{argmax}}
\def\argmin{\operatornamewithlimits{argmin}}

\def\E{\boldsymbol{E}}
\def\P{\boldsymbol{P}}

\def\T{\top}

\newcommand{\keywords}[1]{\par\noindent\emph{Keywords:} #1 \\}

\def\ko{k^{*}}

\numberwithin{equation}{section}
\numberwithin{figure}{section}
\newcounter{example}[section]
\numberwithin{example}{section}
\newcounter{remark}[section]
\numberwithin{remark}{section}
\newtheorem{theorem}{Theorem}[section]

\newtheorem{lemma}[theorem]{Lemma}
\newtheorem{corollary}[theorem]{Corollary}

\newtheorem{exmp}[example]{Example}
\newtheorem{rmrk}[remark]{Remark}
\newenvironment{example}{\begin{exmp}\rm}{\end{exmp}}
\newenvironment{remark}{\begin{rmrk}\rm}{\end{rmrk}}
\renewenvironment{abstract}
    {
      \begin{center}
       \begin{minipage}{16cm}
        \begin{small}
    }
    {   \end{small}
       \end{minipage}
      \end{center}
     \bigskip
    }

\def\TT{T}
\def\eps{\varepsilon}
\def\tcp{\tau}

\def\Tcp{\cc{T}}

\def\IK{\cc{I}}

\begin{document}
\date{\today}

\title
{Adaptive pointwise estimation in time-inhomogeneous conditional heteroscedasticity models}

\author{
P. \v{C}\'\i\v{z}ek,\footnote{Dept.\ of Econometrics \& OR, Tilburg University, P.O.Box 
90153, 5000LE Tilburg, \vspace{-1.6mm}The Netherlands.} 
~W. H\"{a}rdle,\footnote{Humboldt-Universit\"at zu Berlin, Spandauerstrass{}e 1, 10178 
Berlin, Germany.} 
~and 
V. Spokoiny\footnote{Weierstrass-Institute, Mohrenstr. 39, 10117 Berlin, Germany.}
}
\date{}

\maketitle
\pagestyle{plain}
\markboth%
{\hfil {\sc \small inference for time-inhomogeneous models}
\hfil}%
{\hfil {\sc \small p. \v{c}\'\i\v{z}ek, w. h\"{a}rdle, and v. spokoiny} \hfil}


\begin{abstract}
{\normalfont\small
This paper offers a new method for estimation and forecasting of 
the volatility of financial
time series when the stationarity assumption is violated.
Our general local parametric approach particularly applies to
general varying-coefficient parametric models, such as GARCH, whose
coefficients may arbitrarily vary with time.
Global parametric, smooth transition, and change-point models are special cases.
The method is
based on an adaptive pointwise selection of the largest interval of homogeneity
with a given right-end point by a local change-point
analysis. We construct locally adaptive estimates that can perform this task
and investigate them both from the theoretical point of view and by Monte Carlo
simulations. In the particular case of GARCH estimation, the proposed method is
applied to stock-index series and is shown to outperform the standard parametric
GARCH model. 
}
\end{abstract}

\noindent\textit{JEL codes:} C13, C14, C22

\keywords{
adaptive pointwise estimation, autoregressive models,
conditional heteroscedasticity models, local time-homogeneity
}
%


\section{Introduction}\label{intro}

A growing amount of econometrical and statistical research is devoted to modeling
financial time series and their volatility, which measures
dispersion at a point in time (i.e., conditional variance).
Although many economies and financial markets have been recently experiencing
many shorter and longer periods of instability or uncertainty such as Asian
crisis (1997), Russian crisis (1998), start of the European currency (1999),
the ``dot-Com'' technology-bubble crash (2000--2002), or the terrorist attacks
(September, 2001), the war in Iraq (2003), and the current global recession
(2008), mostly used econometric models are based on the assumption of time
homogeneity. This includes linear and nonlinear autoregressive (AR) and
moving-average models and conditional heteroscedasticity (CH) models such as
ARCH (Engel, 1982) and GARCH (Bollerslev, 1986), stochastic volatility models
(Taylor, 1986), as well as their combinations such as AR-GARCH.

On the other hand, the market and institutional changes have long been assumed
to cause structural breaks in 
financial time series, which
was confirmed, for example, in data on
stock prices (Andreou and Ghysels, 2002;
Beltratti and Morana, 2004) and exchange rates (Herwatz and Reimers, 2001).
Moreover, ignoring these breaks can adversely affect the modeling, estimation,
and forecasting of volatility 
as suggested by Diebold and Inoue (2001), Mikosch and Starica (2004), Pesaran
and Timmermann (2004), and Hillebrand (2005), for instance. Such findings led
to the development of the change-point analysis in the context of 
CH models; see for example, Chen and Gupta (1997), 
Kokoszka and Leipus (2000), 
and Andreou and Ghysels (2006).

An alternative approach lies in relaxing the assumption of time homogeneity and
allowing some or all model parameters to vary over time 
(Chen and Tsay, 1993; Cai et al., 2000; Fan and Zhang, 2008). 
Without structural assumptions about the transition of model parameters over
time, time-varying coefficient models have to be estimated nonparametrically,
for example, under the identification condition that their parameters are
smooth functions of time (Cai et al., 2000). In this paper, we follow a
different strategy based on the assumption that a time series can be locally,
that is over short periods of time, approximated by a parametric model. As
suggested by Spokoiny (1998), such a local approximation can form a starting
point in the search for the longest period of stability (homogeneity), that is,
for the longest time interval in which the series is described well by the
parametric model. In the context of the local constant approximation, this
strategy was employed for volatility modeling by H\"ardle et al.\ (2003),
Mercurio and Spokoiny (2004), and Spokoiny (2008). Our aim is to generalize
this approach so that it can identify intervals of homogeneity for any
parametric CH model regardless of its complexity.

In contrast to the local constant approximation of the volatility of a process
(Mercurio and Spokoiny, 2004), the main benefit of the proposed generalization
consists in the possibility to apply the methodology to a much wider class of
models and to forecast over a longer time horizon.
The reason is that approximating the mean or volatility process by a constant
is in many cases too restrictive or even inappropriate and it is fulfilled only
for short time intervals, which precludes its use for longer-term forecasting.
On the contrary, parametric models like 
GARCH mimic the majority of stylized facts about 
financial time series and can reasonably fit the data over rather long periods
of time in many practical situations. Allowing for time dependence of model
parameters offers then much more flexibility in modeling real-life time series,
which can be both with or without structural breaks since global parametric
models are included as a special case.

Moreover, the proposed adaptive local parametric modeling unifies the
change-point and varying-coefficient models. First, since finding the longest
time-homogen\-eous interval for a parametric model at any point in time
corresponds to detecting the most recent change-point in a time series, this
approach resembles the change-point modeling as in Bai and Perron (1998)
or Mikosch and Starica (1999, 2004), for instance, but it does not require
prior information such as the number of changes.
Additionally, the traditional structural-change tests 
require that the number of observations before each break point
is large (and can grow to infinity) as these tests rely on asymptotic results. 
On the contrary, the proposed pointwise adaptive estimation does not rely on
asymptotic results and does not thus place any requirements on the number of
observations before, between, or after any break point. Second, since the
adaptively selected time-homogeneous interval used for estimation necessarily
differs at each time point, the model coefficients can arbitrarily vary over
time. In comparison to varying-coefficient models assuming smooth development
of parameters over time (Cai et al., 2000), our approach however allows for
structural breaks in the form of sudden jumps in parameter values.

Although seemingly straightforward, extending Mercurio and Spokoiny (2004)'s
procedure to the local parametric modeling is a nontrivial problem, which
requires new tools and techniques. We concentrate here on the change-point
estimation of financial time series, which are often modelled by data-demanding
models such as GARCH. While the benefits of a flexible change-point analysis
for time series spanning several years are well known, its feasibility (which
stands in the focus of this work) is much more difficult to achieve. The reason
is thus that, at each time point, the procedure starts from a small interval,
where a local parametric approximation holds, and then iteratively extends this
interval and tests it for time-homogeneity until a structural break is found or
data exhausted. Hence, a model has to be initially estimated on very short time
intervals (e.g., 10 observations). Using standard testing methods, such a
procedure might be feasible for simple parametric models, 
but it is hardly possible for more complex parametric models such as GARCH that 
generally require rather large samples for reasonably good estimates. 

Therefore, we use an alternative and more robust approach to local change-point analysis
that relies on a finite-sample theory of testing a growing sequence of
historical time intervals on homogeneity against a change-point alternative. 
The proposed adaptive pointwise estimation procedure applies to a wide
class of time-series models, including AR and CH models. Concentrating 
on the latter, we describe in details the adaptive procedure, 
derive its basic properties, 
and focusing on the feasibility of adaptive estimation for CH models,
study the performance in comparison to the parametric (G)ARCH by means of
simulations and real-data applications. The main conclusion is two-fold: on one
hand, the adaptive pointwise estimation is feasible and beneficial also in
the case of data-demanding models such as GARCH; on the other hand, the
adaptive estimates based on various parametric models such as constant, ARCH,
or GARCH models are much to closer to each other (while being better than the
usual parametric estimates), which eliminates to some extent the need for using
too complex models in adaptive estimation.

The rest of the paper is organized as follows. In Section~\ref{Sparamodel}, the
parametric estimation of CH models and its finite-sample properties are
introduced. In Section~\ref{Sprocedure}, we define the adaptive pointwise estimation
procedure and discuss the choice of its parameters. Theoretical properties of
the method are discussed in Section~\ref{Sresults}. In the specific case of the
ARCH(1) and GARCH(1,1) models, a simulation study illustrates the performance
of the new methodology with respect to the standard parametric and change-point
models in Section~\ref{Sappl}. Applications to real stock-index series data are
presented in Section~\ref{Srealappls}. The proofs are provided in the Appendix.

\section{Parametric conditional heteroscedasticity models} \label{Sparamodel}
\label{Sparametric}

Consider a time series \( Y_{t} \) in discrete time, \( t \in N \).
The conditional heteroscedasticity assumption means that \( Y_{t} = \sigma_{t} \eps_{t} \), where 
\( \{ \eps_{t} \}_{t \in N} \) is a white noise process and \( \{ \sigma_{t} \}_{t \in N} \) 
is a predictable volatility (conditional variance) process. 
Modelling of the volatility process \( \sigma_{t} \) typically relies on some parametric 
CH specification such as the ARCH (Engle, 1982) and GARCH (Bollerslev, 1986) models:
\begin{equation}
\label{garch_pq}
    \sigma_{t}^{2}
    =
    \omega + \sum_{i=1}^{p} \alpha_{i} Y_{t-i}^{2}
    + \sum_{j=1}^{q} \beta_{j} \sigma_{t-j}^{2},
\end{equation}
where \( p \in N \), \( q \in N \), and 
\( \thetav = (\omega, \alpha_1,\ldots,\alpha_{p}, \beta_1,\ldots,\beta_{q})^{\T} \)
is the parameter vector. An attractive feature of this model is that, even with 
very few coefficients, one can model most stylized facts of financial time series 
like volatility clustering or excessive kurtosis, for instance. 
A number of (G)ARCH extensions were proposed to make the model even more
flexible; for example, EGARCH (Nelson, 1991), QGARCH (Sentana, 1995), and
TGARCH (Glosten et al., 1993)
that account for asymmetries in a volatility process.

All such CH models can be put into a common class
of generalized linear volatility models:
\vspace{-2ex}
\begin{eqnarray}
    Y_{t} 
    &=& 
    \sigma_{t} \eps_{t} = \sqrt{g(X_{t})} \eps_{t}, 
    \label{Yspt}\label{Yproc}\label{timemodel}
    \\
    X_{t} 
    &=& 
    \omega + \sum_{i=1}^{p} \alpha_{i} h(Y_{t-i}) + \sum_{j=1}^{q} \beta_{j} X_{t-j} \, , 
    \label{gltm}\label{Xtparam}
\end{eqnarray}
where \( g \) and \( h \) are known functions and \( X_{t} \) is a (partially)
unobserved process (structural variable) that models the volatility coefficient 
\( \sigma_{t}^{2} \) via transformation \( g \): \( \sigma_{t}^{2} = g(X_{t}) \). 
For example, the GARCH model (\ref{garch_pq})
is described by \( g(u) = u \) and \( h(r) = r^{2} \). 

Model (\ref{timemodel})--(\ref{Xtparam}) is time homogeneous in the sense that
the process \( Y_{t} \) follows the same structural equation at each time
point. In other words, the parameter \(\thetav \) and hence the structural
dependence in \( Y_{t} \) is constant over time. Even though models like
(\ref{Yspt})--(\ref{gltm}) can often fit data well over a longer period of time, the
assumption of homogeneity is too restrictive in practical applications: to
guarantee a sufficient amount of data for sufficiently precise estimation, these
models are often applied over time spans of many years.
On the contrary, the strategy pursued here requires only local time
homogeneity, which means that at each time point \( t \)  there is a (possibly
rather short) interval \( [t-m,t] \), where the process \( Y_{t} \)  is well
described by model (\ref{timemodel})--(\ref{Xtparam}). This strategy aims then
both at finding an interval of homogeneity (preferably as long as possible) and
at the estimation of the corresponding parameter values \( \thetav \),
which then enable predicting \( Y_{t} \) and \( X_{t} \).


Next, we discuss the parameter estimation for model (\ref{timemodel})--(\ref{Xtparam})
using observations \( Y_{t} \) from some time interval \( I = [t_{0},t_{1}] \).
The conditional distribution of each observation \( Y_{t} \) given the past 
\( \cc{F}_{t-1} \) is determined by 
the structural variable \( X_{t} \), whose dynamics is described by the parameter vector
\( \thetav \): \( X_{t} = X_{t}(\thetav) \) for \( t \in I \) due to (\ref{Xtparam}). 
We denote the underlying value of \( \thetav \) by \( \thetavs \).

For estimating \( \thetavs \), we apply the quasi maximum likelihood (quasi-MLE) approach
using the estimating equations generated under the assumption of Gaussian errors \( \eps_{t} \). 
This guarantees efficiency under the normality of innovations and consistency under rather 
general moment conditions (Hansen and Lee, 1994; Francq and Zakoian, 2007). The 
log-likelihood for the model (\ref{timemodel})--(\ref{Xtparam}) 
on an interval \( I \)  can be represented in the form
\begin{eqnarray*} 
    L_{I}(\thetav)
    = 
    \sum_{t \in I} \ell \{ Y_{t},g[X_{t}(\thetav)] \}
\end{eqnarray*} 
with log-likelihood function 
\( \ell(y,\upsilon) = -0.5 \left\{ \log (\upsilon) + y^{2}/\upsilon \right\} \).
We define the quasi-MLE estimate \( \tilde{\thetav}_{I} \) of the parameter
\( \thetav \) by maximizing the log-likelihood \( L_{I}(\thetav) \),
\begin{eqnarray}
\label{hatthetaI}
    \tilde{\thetav}_{I} = \argmax_{\thetav \in \Theta} L_{I}(\thetav)
    = 
    \argmax_{\thetav\in \Theta} \sum_{t \in I}
    \ell\{ Y_{t},g[X_{t}(\thetav)] \},
\end{eqnarray}
and denote by \( L_{I}(\tilde{\thetav}_{I}) \) the corresponding maximum.


To characterize the quality of estimating the parameter vector
\( \thetavs = (\omega,\alpha_{1},\ldots,\alpha_{p}, \) \( \beta_{1},\ldots,\beta_{q})^{\T} \)
by \( \tilde{\thetav}_{I} \), we now present an exact (nonasymptotic) exponential risk 
bound. This bound concerns the value of maximum
\( L_{I}(\tilde{\thetav}_{I}) = \max_{\thetav \in \Theta} L_{I}(\thetav) \)
rather than the point of maximum \( \tilde{\thetav}_{I} \).
More precisely, we consider difference
\( L_{I}(\tilde{\thetav}_{I},\thetavs)
= L_{I}(\tilde{\thetav}_{I}) - L_{I}(\thetavs) \).
By definition, this value is non-negative and represents the deviation of the
maximum of the log-likelihood process from its value at the ``true''
point \( \thetavs \).
Later, we comment on how the accuracy of estimation of the parameter
\( \thetavs \) by \( \tilde{\thetav}_{I} \) relates to the value
\( L_{I}(\tilde{\thetav}_{I},\thetavs) \). We will also see that the bound for
\( L_{I}(\tilde{\thetav}_{I},\thetavs) \) yields the confidence set for the
parameter \( \thetavs \), which will be used for the proposed change-point
test. Now, the nonasymptotic risk bound is specified in the following theorem,
which formulates Corollary 4.2 and 4.3 of Spokoiny (2009) for the case of quasi-MLE 
estimation of a CH model (\ref{timemodel})--(\ref{Xtparam}) at \( \thetav = \thetavs \).
The result can be viewed as an extension of the Wilks phenomenon that the 
distribution of \( L_{I}(\tilde{\thetav}_{I},\thetavs) \) for a linear Gaussian model 
is \( \chi^{2}_{p}/2 \), where \( p \) is the number of estimated parameters in the model.
\begin{theorem}
\label{TGARCH11}
Assume that the process \( Y_{t} \) follows the model
(\ref{timemodel})--(\ref{Xtparam}) with the parameter 
\( \thetavs \in \Theta \), where the set \( \Theta \) is compact. 
The function \( g(\cdot) \) is assumed to be continuously differentiable with the 
uniformly bounded first derivative and \( g(x) \ge \delta > 0 \) for all \( x \).
Further, let the process \( X_{t}(\thetav) \) be sub-ergodic in the sense that for any
smooth function \( f(\cdot) \) there exists \( f^{*} \) such that for any time interval 
\( I \)
\begin{eqnarray*}
    \E_{\thetavs} \biggl| 
        \sum_{I} \bigl\{ f(X_{t}(\thetav)) - \E_{\thetavs} f(X_{t}(\thetav)) \bigr\} 
    \biggr|^{2}
    \le 
    f^{*} |I| ,
    \qquad 
    \thetav \in \Theta .
\end{eqnarray*}    
Let finally \( \E \exp \{ \kappa (\eps_{t}^{2}-1) | \cc{F}_{t-1} \} \le 
c(\kappa) \)  for some \( \kappa > 0 \), \( c(\kappa) > 0 \), and all \( t\in N \). 
Then there are \( \lambda > 0 \) and \( \lexpB(\lambda,\thetavs) > 0 \)  
such that for any interval \( I \) and \( \zz > 0 \) 
\begin{eqnarray} 
\label{thm21eq2}
    \P_{\thetavs}\bigl( L_{I}(\tilde{\thetav}_{I},\thetavs) > \zz \bigr)
    \le
    \exp\{ \lexpB(\lambda,\thetavs) - \lambda \zz \}.
\end{eqnarray}
Moreover, for any \( r > 0 \), there is a constant \( \Crlp_{r}(\thetavs) \) 
such that
\begin{eqnarray}
\label{THM21EQ3}
    \E_{\thetavs} \bigl| L_{I}(\tilde{\thetav}_{I},\thetavs) \bigr|^{r}
    \le
    \Crlp_{r}(\thetavs).
\end{eqnarray}
\end{theorem}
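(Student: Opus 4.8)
The plan is to read the statement as the specialization to the quasi-MLE of a CH model of the general finite-sample ``square-root Wilks'' bounds of Spokoiny (2009, Corollaries 4.2 and 4.3), so that the real work is to verify the abstract hypotheses of that theory for the present model and then to invoke those corollaries. The central object is the normalized score
\[
    \xi_{I} = D_{I}^{-1} \nabla L_{I}(\thetavs),
    \qquad
    D_{I}^{2} = - \nabla^{2} \Es L_{I}(\thetavs),
\]
with $D_{I}^{2}$ the Fisher-type information on $I$. The general result shows that, once the regularity and exponential-moment conditions hold, $L_{I}(\tilde{\thetav}_{I},\thetavs)$ differs from $\tfrac{1}{2}\|\xi_{I}\|^{2}$ by a remainder that is small with overwhelming probability; the tail $\exp\{\lexpB - \lambda \zz\}$ in (\ref{thm21eq2}) is then exactly the chi-squared-type tail of $\tfrac{1}{2}\|\xi_{I}\|^{2}$, i.e. the nonasymptotic form of the Wilks phenomenon recalled before the theorem.

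First I would compute the score. Writing $v_{t}(\thetav) = g[X_{t}(\thetav)]$ for the fitted variance and differentiating $\ell(y,v) = -\tfrac{1}{2}\{\log v + y^{2}/v\}$, at the true point one has $Y_{t}^{2}/v_{t}(\thetavs) = \eps_{t}^{2}$, so
\[
    \nabla L_{I}(\thetavs)
    = \tfrac{1}{2} \sum_{t \in I} \frac{\nabla v_{t}(\thetavs)}{v_{t}(\thetavs)} \, (\eps_{t}^{2} - 1).
\]
Because $\nabla v_{t}(\thetavs)/v_{t}(\thetavs)$ is $\cc{F}_{t-1}$-measurable and $\Es[\eps_{t}^{2} - 1 \mid \cc{F}_{t-1}] = 0$, this is a sum of martingale differences. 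The assumed conditional bound $\Es\exp\{\kappa(\eps_{t}^{2} - 1) \mid \cc{F}_{t-1}\} \le c(\kappa)$ is precisely what delivers the exponential-moment condition on $\xi_{I}$: since $\nabla v_{t}/v_{t} = (g'/g)\,\nabla X_{t}$ and $g \ge \delta > 0$ with $g'$ bounded, the predictable weights have the needed moments, and a self-normalized martingale argument yields $\log \Es \exp\{\gamma \langle u, \xi_{I}\rangle\} \le \tfrac{1}{2}\nu_{0}^{2}\gamma^{2}$ for unit $u$ and $|\gamma|$ below a threshold, with $\nu_{0}$ free of $I$.

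The deterministic ingredients come from smoothness and sub-ergodicity. Identifiability and the local quadratic regularity of $L_{I}$ near $\thetavs$ — a two-sided comparison of $\nabla^{2} L_{I}$ with $D_{I}^{2}$ on a neighborhood — follow from continuous differentiability of $g$, the bound $g \ge \delta$, and compactness of $\Theta$. Applying the sub-ergodicity hypothesis to the smooth functions that build $\nabla v_{t}/v_{t}$ and its square shows that the empirical information concentrates around $D_{I}^{2}$ with deviation of order $|I|^{1/2}$; after normalization by $D_{I}$, whose square scales like $|I|$, this deviation is negligible and, crucially, renders every constant independent of the interval $I$. This self-normalization is the reason the bound is uniform over all $I$.

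I expect the main obstacle to be controlling the stochastic component \emph{uniformly over} $\thetav \in \Theta$, not only at $\thetavs$: one must bound the exponential moments of the increments of the log-likelihood process over the whole parameter set and chain them while keeping all constants free of $|I|$. This is where the unbounded structural variable $X_{t}$ and its gradient — which obey the same recursion (\ref{Xtparam}) — interact delicately with the conditional exponential-moment bound, and it is exactly the difficulty the sub-ergodicity hypothesis is designed to absorb, reducing the chaining to a covering estimate over the compact $\Theta$. Once (\ref{thm21eq2}) is established, the moment bound (\ref{THM21EQ3}) is immediate by integrating the tail,
\[
    \Es \bigl| L_{I}(\tilde{\thetav}_{I},\thetavs) \bigr|^{r}
    = \int_{0}^{\infty} r \zz^{r-1} \, \Ps\bigl(L_{I}(\tilde{\thetav}_{I},\thetavs) > \zz\bigr) \, \dd\zz
    \le \int_{0}^{\infty} r \zz^{r-1} \min\{1, \ex^{\lexpB - \lambda \zz}\} \, \dd\zz
    =: \Crlp_{r}(\thetavs) < \infty .
\]
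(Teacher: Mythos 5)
Your proposal matches the paper's treatment: the paper gives no self-contained proof of Theorem \ref{TGARCH11} but states it as the specialization of Corollaries 4.2 and 4.3 of Spokoiny (2009) to the quasi-MLE of the CH model, which is precisely the reduction you propose, and your sketch of the hypothesis verification (martingale score $\tfrac{1}{2}\sum_{t\in I}\frac{\nabla v_{t}}{v_{t}}(\eps_{t}^{2}-1)$ controlled by the conditional exponential-moment bound, sub-ergodicity giving interval-free concentration of the information, and tail integration for (\ref{THM21EQ3})) is consistent with the roles the paper assigns to its assumptions. Your write-up is, if anything, more explicit than the paper, which leaves the verification entirely to the cited reference.
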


\begin{remark}
\label{RLDB}
The condition \( g(x) \ge \delta > 0 \) guarantees that the
variance process cannot reach zero. In the case of GARCH, it is sufficient to
assume \( \omega > 0 \), for instance. 
\end{remark}

One attractive feature of Theorem~\ref{TGARCH11}, formulated in the following
corollary, is that it enables constructing the non-asymptotic confidence
sets and testing the parametric hypothesis on the basis of the fitted
log-likelihood \( L_{I}(\tilde{\thetav}_{I},\thetav) \). This feature is
especially important for our procedure presented in Section~\ref{Sprocedure}.

\begin{corollary}
\label{CGARCH11}
Under the assumptions of Theorem~\ref{TGARCH11},
let the value \( \zz_{\alpha} \) fulfill
\( \lexpB(\lambda,\thetavs) - \lambda \zz_{\alpha} < \log \alpha \) for some
\( \alpha < 1 \). Then the random set
\( 
\cc{E}_{I}(\zz_{\alpha})
=
\{ \thetav: L_{I}(\tilde{\thetav}_{I},\thetav) \le \zz_{\alpha} \}
\) 
is an \( \alpha \)-confidence set for \( \thetavs \) in the sense that
\(
\P_{\thetavs}( \thetavs \not\in \cc{E}_{I}(\zz_{\alpha}) ) \le \alpha.
\)
\end{corollary}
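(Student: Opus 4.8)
The plan is to obtain the corollary as an immediate consequence of the exponential tail bound (\ref{thm21eq2}) in Theorem~\ref{TGARCH11}; the only real work is to translate the geometric statement ``\( \thetavs \notin \cc{E}_I(\zz_{\alpha}) \)'' into a statement about the fitted log-likelihood excess at \( \thetavs \).

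First I would unwind the definition of the random set. Since \( \cc{E}_I(\zz_{\alpha}) = \{ \thetav : L_I(\tilde{\thetav}_I, \thetav) \le \zz_{\alpha} \} \), membership of the true parameter \( \thetavs \) is governed entirely by the value of \( L_I(\tilde{\thetav}_I, \thetav) \) at the generic argument \( \thetav = \thetavs \). Hence the complementary event satisfies the exact identity
\[
\{ \thetavs \notin \cc{E}_I(\zz_{\alpha}) \}
= \{ L_I(\tilde{\thetav}_I, \thetavs) > \zz_{\alpha} \},
\]
where \( L_I(\tilde{\thetav}_I, \thetavs) = L_I(\tilde{\thetav}_I) - L_I(\thetavs) \) is precisely the nonnegative log-likelihood excess whose tail is controlled by the theorem.

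Next I would apply (\ref{thm21eq2}) with \( \zz = \zz_{\alpha} \), giving
\[
\P_{\thetavs}( \thetavs \notin \cc{E}_I(\zz_{\alpha}) )
= \P_{\thetavs}\bigl( L_I(\tilde{\thetav}_I, \thetavs) > \zz_{\alpha} \bigr)
\le \exp\{ \lexpB(\lambda, \thetavs) - \lambda \zz_{\alpha} \}.
\]
Finally, the hypothesis \( \lexpB(\lambda, \thetavs) - \lambda \zz_{\alpha} < \log \alpha \) together with monotonicity of the exponential yields \( \exp\{ \lexpB(\lambda, \thetavs) - \lambda \zz_{\alpha} \} < \alpha \), which gives the claimed bound \( \P_{\thetavs}( \thetavs \notin \cc{E}_I(\zz_{\alpha}) ) \le \alpha \).

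There is no genuine obstacle here: the entire quantitative content of the corollary is already packaged in the tail bound of Theorem~\ref{TGARCH11}, and the argument reduces to a one-line substitution. The only point requiring care is the bookkeeping in the first step, namely recognizing that the defining inequality of \( \cc{E}_I \), when evaluated at \( \thetav = \thetavs \), reproduces exactly the random variable \( L_I(\tilde{\thetav}_I, \thetavs) \) appearing in (\ref{thm21eq2}); consequently no union bound over \( \thetav \) and no separate control of the estimator \( \tilde{\thetav}_I \) are needed.
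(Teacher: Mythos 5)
Your proposal is correct and follows exactly the route the paper takes: the paper's proof is the one-line remark that the corollary ``directly follows from (\ref{thm21eq2})'' given the choice of \( \zz_{\alpha} \), and your write-up simply makes explicit the identification of the event \( \{\thetavs \notin \cc{E}_{I}(\zz_{\alpha})\} \) with \( \{L_{I}(\tilde{\thetav}_{I},\thetavs) > \zz_{\alpha}\} \) before substituting into the tail bound. No further comment is needed.
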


Theorem \ref{TGARCH11} also gives a non-asymptotic and fixed upper bound
for the risk of estimation \( L_{I}(\tilde{\thetav}_{I},\thetavs) \)
that applies to an arbitrary sample size
\( |I| \). To understand the relation of this result to the classical rate
result, we can apply the standard arguments based on the quadratic expansion of the
log-likelihood \( L(\tilde{\thetav},\thetav) \).
Let \( \nabla^{2}L(\thetav) \) denote the Hessian matrix of the second
derivatives of \( L(\thetav) \) with respect to the parameter \( \thetav \).
Then
\begin{eqnarray}
\label{Lquadro}
    L_{I}(\tilde{\thetav}_{I},\thetavs)
    =
    0.5
    \bigl( \tilde{\thetav}_{I} - \thetavs \bigr)^{\T}
    \nabla^{2}L_I(\thetav'_{I})
    \bigl( \tilde{\thetav}_{I} - \thetavs \bigr),
\end{eqnarray}
where \( \thetav'_{I} \) is a convex combination of \( \thetavs \) and \( \tilde{\thetav}_{I} \). 
Under usual regularity assumptions and for sufficiently large \( |I| \),
the normalized matrix \( |I|^{-1} \nabla^{2} L_I(\thetav) \) is close to some matrix
\( V(\thetav) \), which depends only on the stationary distribution of
\( Y_{t} \) and is continuous in \( \thetav \). Then (\ref{thm21eq2}) 
approximately means that
\( 
    \bigl\| \sqrt{V(\thetavs)} ( \tilde{\thetav}_{I} - \thetavs ) \bigr\|^{2}
    \le
    \zz/|I|
\) 
with probability close to 1 for large \( \zz \). Hence, the large
deviation result of Theorem~\ref{TGARCH11} yields the root-\( |I| \)
consistency of the MLE estimate \( \tilde{\thetav}_{I} \). See Spokoiny (2009)
for further details.

\section{Pointwise adaptive nonparametric estimation}
\label{Sprocedure}

An obvious feature of the model (\ref{timemodel})--(\ref{Xtparam}) is that the
parametric structure of the process is assumed constant over the whole sample
and cannot thus incorporate changes and structural breaks at unknown times in
the model. A natural generalization leads to models whose coefficients may change
over time (Fan and Zhang, 2008).
%
One can then assume that the structural process \( X_{t} \) satisfies the
relation (\ref{Xtparam}) at any time, but the vector of coefficients \( \thetav \)
may vary with the time \( t \), \( \thetav = \thetav(t) \). The estimation of
the coefficients as general functions of time is possible only under some
additional assumptions on these functions. Typical assumptions are (i) varying
coefficients are smooth functions of time (Cai et al., 2000) and (ii) varying
coefficients are piecewise constant functions (Bai and Perron, 1998; Mikosch
and Starica, 1999, 2004).

Our {local parametric approach} differs from the commonly used identification
assumptions (i) and (ii). We assume that the observed data \( Y_{t} \) are
described by a (partially) unobserved process \( X_{t} \) due to
(\ref{timemodel}), and at each point \( T \), there exists a historical
interval \( I(\TT) = [t_{0},\TT] \) in which the process \( X_{t} \) ``nearly''
follows the parametric specification (\ref{Xtparam}) (see Section
\ref{Sresults} for details on what ``nearly'' means). This local structural
assumption enables us to apply well developed parametric estimation for data
\( \{ Y_{t} \}_{t \in I(\TT)} \) to estimate the underlying parameter \( \thetav =
\thetav(\TT) \) by \( \hat{\thetav} = \hat{\thetav}(\TT) \). (The estimate 
\( \hat{\thetav} = \hat{\thetav}(\TT) \) can be then used for estimating the value
\( \hat{X}_{\TT} \) of the process \( X_{t} \) at \( \TT \) from equation
(\ref{Xtparam}) and for further modeling such as forecasting \( Y_{\TT+1} \)).
Moreover, this assumption includes the above mentioned ``smooth transition''
and ``switching regime'' assumptions (i) and (ii) as special cases: parameters
\( \hat{\thetav}(\TT) \) vary over time as the interval \( I(\TT) \) changes
with \( \TT \), and at the same time, discontinuities and jumps in \(
\hat{\thetav}(\TT) \) as a function of time are possible.

To estimate \( \hat{\thetav}({\TT}) \), we have to find the historical interval of
homogeneity \( I(\TT) \), that is, the longest interval \( I \) with the right-end
point \( \TT \), where data do not contradict a specified parametric model with
fixed parameter values. Starting at each time \( \TT \) with a very short interval
\( I = [t_{0},\TT] \), we search by successive extending and testing of interval \( I \) on
homogeneity against a change-point alternative: if the hypothesis of
homogeneity is not rejected for a given \( I \), a larger interval is taken and
tested again. Contrary to Bai and Perron (1998) and Mikosch and Starica (1999),
who detect all change points in a given time series, our approach is local: it
focuses on the local change-point analysis near the point \( \TT \) of estimation
and tries to find only one change closest to the reference point.

In the rest of this section, we first discuss the test statistics employed to
test the time-homogeneity of an interval \( I \) against a change-point alternative
in Section~\ref{Stest}. Later, we rigorously describe the pointwise adaptive 
estimation procedure in Section~\ref{Ssprocedure}. Its implementation and the
choice of parameters entering the adaptive procedure are described in
Sections \ref{Ssprocedure}--\ref{Ssrrho}. Theoretical properties of the method
are studied in Section~\ref{Sresults}.

\subsection{Test of homogeneity against a change-point alternative}
\label{Stest}

The pointwise adaptive estimation procedure crucially relies on the test of
local time-homogeneity of an interval \( I = [t_{0},\TT] \). The null hypothesis
for \( I \) means that the observations \( \{ Y_{t} \}_{t\in I} \) follow the
parametric model (\ref{timemodel})--(\ref{Xtparam}) with a fixed parameter \( 
\thetavs \), leading to the quasi-MLE estimate \( \tilde{\thetav}_{I} \) from (\ref{hatthetaI}) 
and the corresponding fitted log-likelihood \( L_{I}(\tilde{\thetav}_{I}) \).

The change-point alternative for a given change-point location \( \tcp \in I \)
can be described as follows:
process \( Y_{t} \) follows the parametric model (\ref{timemodel})--(\ref{Xtparam})
with a parameter \( \thetav_{J} \) for \( t \in J = [t_{0},\tcp] \) and with a
different parameter \( \thetav_{J^{c}} \) for \( t \in J^{c} = [\tcp+1,\TT] \);
\( \thetav_{J} \ne \thetav_{J^{c}} \). 
The fitted log-likelihood under this alternative reads as \( L_{J}(\tilde{\thetav}_{J}) +
L_{J^{c}}(\tilde{\thetav}_{J^{c}}) \). The test of homogeneity can be performed
using the likelihood ratio (LR) test statistic 
\( T_{I,\tcp} \):
\begin{eqnarray*}
    T_{I,\tcp}
    =
    \max_{\thetav_{J}, \thetav_{J^{c}} \in \Theta}
    \left\{ L_{J}(\thetav_{J}) + L_{J^{c}}(\thetav_{J^c}) \right\}
    - \max_{\thetav \in \Theta} L_{I}(\thetav)
    =
    \bigl\{ 
        L_{J}(\tilde{\thetav}_{J}) + {L}_{J^{c}}(\tilde{\thetav}_{J^c}) 
        - {L}_{I}(\tilde{\thetav}_{I}) 
    \bigr\}.
\end{eqnarray*}
Since the change-point location \( \tcp \) is generally not known, 
we consider the supremum
of the LR statistics \( T_{I,\tcp} \) over some subset \( \tcp\in\Tcp(I) \),
cf. Andrews (1993):
\begin{eqnarray}
\label{LRtest}
    T_{I,\Tcp(I)}
    =
    \sup_{\tcp \in \Tcp(I)} T_{I,\tcp} \, .
\end{eqnarray}    
A typical example of a set \( \Tcp(I) \) is
\( \Tcp(I) = \{ \tau : t_{0} + m' \le \tcp \le \TT - m'' \ \} \) for some fixed 
\( m', m'' > 0 \).

\subsection{Adaptive search for the longest interval of homogeneity}
\label{Ssprocedure}

This section presents the proposed adaptive pointwise estimation procedure. 
At each point \( \TT \),
we aim at estimating the unknown parameters \( \thetav(\TT) \) from
historical data \( Y_{t}, \,  t \le \TT \); this procedure repeats for every current
time point \( \TT \) as new data arrives. At the first step, the procedure selects
on the base of historical data an interval \( \hat{I}(\TT) \) of homogeneity 
in which the data do not contradict the parametric model
(\ref{timemodel})--(\ref{Xtparam}). Afterwards, the quasi-MLE estimation is
applied using the selected historical interval \( \hat{I}(\TT) \) to obtain estimate 
\( \hat{\thetav}(\TT) = \tilde{\thetav}_{\hat{I}(\TT)} \). From now on, we consider
an arbitrary, but fixed time point \( \TT \).

Suppose that a growing set \( I_{0} \subset I_{1} \subset \ldots \subset I_{\K} \)
of historical interval-candidates \( I_{k} = [\TT-m_{k}+1,\TT] \) with 
the right-end point \( \TT \) is fixed. 
The smallest interval \( I_{0} \) is accepted automatically as homogeneous. 
Then the procedure successively checks every larger interval \( I_{k} \) on 
homogeneity using the test statistic \( T_{I_{k},\Tcp(I_{k})} \) from (\ref{LRtest}).
The selected interval \( \hat{I} \) corresponds to the largest accepted interval 
\( I_{\hat{k}} \) with index \( \hat{k} \) such that
\begin{eqnarray}\label{eqcritvalintro}
    T_{I_{k},\Tcp(I_{k})}
    \le 
    \zz_{k},
    \qquad 
    k \le \hat{k} ,
\end{eqnarray}    
and \( T_{I_{\hat{k}+1},\Tcp(I_{\hat{k}+1})} > \zz_{\hat{k}+1} \), where 
the critical values \( \zz_{k} \) are discussed
later in this section and specified in Section~\ref{Slambda}. 
This procedure then leads to the adaptive estimate 
\( \hat{\thetav} = \tilde{\thetav}_{\hat{I}} \) corresponding to the selected 
interval \( \hat{I} = I_{\hat{k}} \).

The complete description of the procedure includes two steps. 
(A) Fixing the set-up and the parameters of the procedure.
(B) Data-driven search for the longest interval of homogeneity.

\begin{description}
    \item[(A) Set-up and parameters:] 
        \begin{enumerate}
            \item Select a specific parametric model (\ref{timemodel})--(\ref{Xtparam}) 
            (e.g., constant volatility, ARCH(1), GARCH(1,1)).
            \item Select the set \( \IK = (I_{0},\ldots,I_{\K}) \) of interval-candidates, 
            and for each \( I_{k} \in \IK \), the set \( \Tcp(I_{k}) \) 
            of possible change points \( \tcp \in I_{k} \) 
            used in the LR test (\ref{LRtest}). 
            \item Select 
            the critical values \( \zz_{1},\ldots,\zz_{\K} \) in (\ref{eqcritvalintro}) 
            as described in Section~\ref{Slambda}.
        \end{enumerate}
    \item[(B) Adaptive search and estimation:] 
        Set \( k=1 \), \( \hat{I} = I_{0} \),  and 
        \( \hat{\thetav} = \tilde{\thetav}_{I_{0}} \). 
        \begin{enumerate}
            \item Test the hypothesis \( H_{0,k} \) of no change point within the 
            interval \( I_{k} \) using test statistics (\ref{LRtest}) and the critical values 
            \( \zz_{k} \) obtained in (A3). 
            If a change point is detected 
            (\( H_{0,k} \) is rejected), go to (B3). 
            Otherwise proceed with (B2).             
            
            \item 
            Set 
            \( \hat{\thetav} = \tilde{\thetav}_{I_{k}} \) and 
            \( \hat{\thetav}_{I_{k}} = \tilde{\thetav}_{I_{k}} \).
            Further, set \( k := k + 1 \). If \( k \le \K \), repeat (B1);
            otherwise go to (B3).
            
            \item Define \( \hat{I} = I_{k-1} = \) ``the last accepted 
            interval'' 
            and \( \hat{\thetav} = \tilde{\thetav}_{\hat{I}} \). 
            Additionally, set \( \hat{\thetav}_{I_{k}} = \ldots = \hat{\thetav}_{I_{\K}} = \hat{\thetav} 
            \) if \( k\le \K\).
            
        \end{enumerate}
\end{description}

In the step (A), one has to select three main ingredients of the
procedure. First, the parametric model used locally to approximate the process
\( Y_{t} \) has to be specified in (A1), for example, the constant volatility or
GARCH(1,1) in our context. Next in step (A2), the set of intervals 
\( \IK = \{I_{k}\}_{k=0}^{\K} \) 
is fixed, each interval with the right-end point \( \TT \), length \( m_{k} = |I_{k}| \),
and the set \( \Tcp(I_{k}) \) of tested change points.
Our default proposal is to use a geometric grid \( m_{k} = [m_{0} a^{k}], a>1, \) and to set 
\( I_{k} = [\TT-m_{k}+1,\TT] \) and \( \Tcp(I_{k}) = [\TT-m_{k-1}+1,\TT-m_{k-2}] \).
Although our experiments show that the procedure is rather
insensitive to the choice of \( m_{0} \) and \( a \) (e.g., we use \( m_{0}=10 \) and \( a=1.25 \)
in simulations), the length \( m_{0} \) of interval \( I_{0} \) should take into account
the parametric model selected in (A1). The reason is that \( I_{0} \) is always
assumed to be time-homogeneous and \( m_{0} \) thus has to reflect flexibility of the
parametric model; for example, while \( m_{0} = 20 \) might be reasonable for
GARCH(1,1) model, \( m_{0} = 5 \) could be a reasonable choice for the locally
constant approximation of a volatility process. Finally in step (A3), one has to select
the \( \K \) critical values \( \zz_{k} \) in (\ref{eqcritvalintro})
for the LR test statistics \( T_{I_{k},\Tcp(I_{k})} \) from (\ref{LRtest}).
The critical values \( \zz_{k} \) will generally depend on the parametric model describing 
the null hypothesis of time-homogeneity, the set \( \IK \) of intervals \( I_{k} \)
and corresponding sets of considered change points \( \Tcp(I_{k}) \), \( k \le \K \), and 
additionally, on two constants \( r \) and \( \alpn \) that are counterparts of the usual 
significance level. 
All these determinants of the critical values can be selected in step (A) and
the critical values are thus obtained before the actual estimation takes place in step (B).
Due to its importance, the method of constructing critical values 
\( \{ \zz_{k} \}_{k=1}^{\K} \) 
is discussed separately in Section~\ref{Slambda}.

The main step (B) performs the search for the longest time-homogeneous interval.
Initially, \( I_{0} \) is assumed to be homogeneous. 
If \( I_{k-1} \) is negatively tested on the presence of a change point, one 
continues with \( I_{k} \) by employing the test (\ref{LRtest}) in step (B1),
which checks for a potential change point in \( I_{k} \).
If no change point is found, then \( I_{k} \) is accepted as time-homogeneous in step (B2);
otherwise the procedure terminates in step (B3). 
We sequentially repeat these tests until 
we find a change point or exhaust all intervals. The latest (longest) interval
accepted as time-homogeneous is used for estimation in step (B3). Note that the
estimate \( \hat{\thetav}_{I_{k}} \) defined in (B2) and (B3) corresponds to
the latest accepted interval \( \hat{I}_{k} \) after the first \( k \) steps, 
or equivalently, the interval selected out of \( I_{1},\ldots,I_{k} \). 

Moreover, the whole search and estimation step (B) can be repeated at different time points 
\( \TT \) without reiterating the initial step (A) as the critical values \( \zz_{k} \) 
depend only on the approximating parametric model and 
interval lengths \( m_{k} = |I_{k}| \), not on the time point \( \TT \) (see Section~\ref{Slambda}).

\subsection{Choice of critical values \( \zz_{k} \) }
\label{Slambda}

The presented method of choosing the interval of homogeneity \( \hat{I} \) can 
be viewed as multiple testing procedure. The critical values for this procedure 
are selected using the general approach of testing theory: to provide a 
prescribed performance of the procedure under the null hypothesis, that is, in 
the pure parametric situation. 
This means that the procedure is trained on the data generated from 
the pure parametric time homogeneous model from step (A1). The correct choice in this 
situation is the largest considered interval \( I_{\K} \) and a choice \( I_{\hat{k}} \)
with \( \hat{k} < \K \) can be interpreted as  a ``false alarm''.
We select the minimal critical values 
ensuring a small probability of such a false alarm.
%
Our condition slightly differs though from the classical level condition 
because we focus on parameter estimation rather than on hypothesis testing. 

In the pure parametric case, the ``ideal'' estimate corresponds to the largest 
considered interval \( I_{\K} \). 
Due to Theorem~\ref{TGARCH11}, the quality of estimation of the parameter \( \thetavs \) 
by \( \tilde{\thetav}_{I_{\K}} \) can be measured by the log-likelihood ``loss'' 
\( L_{I_{\K}}(\tilde{\thetav}_{I_{\K}},\thetavs) \), which is stochastically bounded with 
exponential and polynomial moments: 
\( \E_{\thetavs} | L_{I_{\K}}( \tilde{\thetav}_{I_{\K}}, \thetavs ) |^{r} \le 
\Crlp_{r}(\thetavs) \). 
If the adaptive procedure stops earlier at some intermediate step \( k < \K \), we select 
instead of \( \tilde{\thetav}_{I_{\K}} \) another estimate \( \hat{\thetav} = \tilde{\thetav}_{I_{k}} \) 
with a larger variability. The loss associated with such a false alarm can be measured by 
the value \( L_{I_{\K}}(\tilde{\thetav}_{I_{\K}},\hat{\thetav}) 
= L_{I_{\K}}(\tilde{\thetav}_{I_{\K}}) - L_{I_{\K}}(\hat{\thetav}) \). 
The corresponding condition bounding the loss due to the adaptive estimation reads as
\begin{eqnarray}
\label{thm21eq3}
    \E_{\thetavs} \bigl| L_{I_{\K}}(\tilde{\thetav}_{I_{\K}},\hat{\thetav}) \bigr|^{r}
    \le
    \alpn \Crlp_{r}(\thetavs) .
\end{eqnarray}    
This is in fact an implicit condition on the critical values \( \{\zz_{k}\}_{k=1}^{\K} \), 
which ensures that the loss associated with the false alarm is at most the 
\( \alpn \)-fraction of the log-likelihood loss of the ``ideal'' or ``oracle'' 
estimate \( \tilde{\thetav}_{I_{\K}} \) for the parametric situation. 
The constant \( r \) corresponds to the power of the loss in (\ref{thm21eq3}),
while \( \alpn \) is similar in meaning to the test level. In the limit 
case when \( r \) tends to zero, this condition (\ref{thm21eq3}) becomes the usual level 
condition: 
\( \P_{\thetavs}(I_{\K} \text{ is rejected}) 
= \P_{\thetavs} \bigl( \tilde{\thetav}_{I_{\K}} \ne \hat{\thetav} \bigr) 
\le \alpn \). 
The choice of the metaparameters \( r \) and \( \alpn \) is discussed in 
Section~\ref{Ssrrho}.

A condition similar to (\ref{thm21eq3}) is imposed at each step of the adaptive procedure.
The estimate \( \hat{\thetav}_{I_{k}} \) coming after the \( k \) steps of the procedure
should satisfy
\begin{eqnarray}
\label{defcv}
    \E_{\thetavs} \bigl| L_{I_{k}}(\tilde{\thetav}_{I_{k}},\hat{\thetav}_{I_{k}}) \bigr|^{r}
    \le
    \alpn_{k} \Crlp_{r}(\thetavs) ,
    \qquad 
    k=1,\ldots,\K,
\end{eqnarray}    
where \( \alpn_{k} = \alpn k / \K \le \alpn \). 
The following theorem presents some sufficient conditions on the critical
values \( \{\zz_{k}\}_{k=1}^{\K} \) ensuring (\ref{defcv}); recall that
\( m_{k} = |I_{k}| \) denotes the length of \( I_{k} \).

\begin{theorem}
\label{TCVGARCH}
Suppose that \( r > 0 \), \( \alpn > 0 \).
Under the assumptions of Theorem \ref{TGARCH11}, there are constants
\( a_{0}, a_{1}, a_{2} \) such that the condition (\ref{defcv}) is fulfilled 
with the choice 
\begin{eqnarray*}
    \zz_{k}
    =
    a_{0} r \log(\alpn^{-1}) 
    + a_{1} r \log (m_{\K}/m_{k-1})
    + a_{2} \log(m_{k}),
    \qquad 
    k = 1,\ldots,\K.
\end{eqnarray*}    
\end{theorem}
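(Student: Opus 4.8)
The plan is to control the adaptive loss $L_{I_{k}}(\tilde{\thetav}_{I_{k}},\hat{\thetav}_{I_{k}})$ by exploiting that it vanishes unless the sequential procedure has already stopped before reaching $I_{k}$. Since $\hat{\thetav}_{I_{k}}=\tilde{\thetav}_{I_{\hat{k}}}$ for the last accepted index $\hat{k}\le k$, and a stop at step $l+1$ occurs exactly when $T_{I_{l+1},\Tcp(I_{l+1})}>\zz_{l+1}$ while all earlier tests pass, I would first decompose according to the stopping index,
\begin{eqnarray*}
    \E_{\thetavs}\bigl|L_{I_{k}}(\tilde{\thetav}_{I_{k}},\hat{\thetav}_{I_{k}})\bigr|^{r}
    =
    \sum_{l=0}^{k-1}
    \E_{\thetavs}\Bigl[\bigl|L_{I_{k}}(\tilde{\thetav}_{I_{k}},\tilde{\thetav}_{I_{l}})\bigr|^{r}\,
    \mathbf{1}(B_{l})\Bigr],
\end{eqnarray*}
where $B_{l}$ is the event that the first rejection occurs at step $l+1$, so that $B_{l}\subseteq\{T_{I_{l+1},\Tcp(I_{l+1})}>\zz_{l+1}\}$. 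Each summand is then split by H\"older's inequality, with conjugate exponents $1/p+1/q=1$, into a loss-moment factor $(\E_{\thetavs}|L_{I_{k}}(\tilde{\thetav}_{I_{k}},\tilde{\thetav}_{I_{l}})|^{qr})^{1/q}$ and a false-alarm factor $\P_{\thetavs}(B_{l})^{1/p}$, reducing the task to bounding these two quantities and summing a geometric series in $l$.

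For the loss moments, note that since $\tilde{\thetav}_{I_{k}}$ maximizes $L_{I_{k}}$, the loss is nonnegative and equals $L_{I_{k}}(\tilde{\thetav}_{I_{k}},\thetavs)-L_{I_{k}}(\tilde{\thetav}_{I_{l}},\thetavs)$, whence $|L_{I_{k}}(\tilde{\thetav}_{I_{k}},\tilde{\thetav}_{I_{l}})|\le L_{I_{k}}(\tilde{\thetav}_{I_{k}},\thetavs)+|L_{I_{k}}(\tilde{\thetav}_{I_{l}},\thetavs)|$. The first term has all moments bounded by $\Crlp_{qr}(\thetavs)$ by Theorem~\ref{TGARCH11}. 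For the second I would use the quadratic expansion (\ref{Lquadro}) underlying that theorem: a second-order Taylor expansion of $L_{I_{k}}(\cdot,\thetavs)$ at $\thetavs$ has Hessian $\approx -m_{k}V(\thetavs)$, so that $|L_{I_{k}}(\tilde{\thetav}_{I_{l}},\thetavs)|$ is, up to a controlled score cross term, of order $m_{k}\|\sqrt{V(\thetavs)}(\tilde{\thetav}_{I_{l}}-\thetavs)\|^{2}$, while Theorem~\ref{TGARCH11} applied on $I_{l}$ bounds the moments of $\|\sqrt{V(\thetavs)}(\tilde{\thetav}_{I_{l}}-\thetavs)\|^{2}\sim L_{I_{l}}(\tilde{\thetav}_{I_{l}},\thetavs)/m_{l}$. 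This yields a bound of the form $\E_{\thetavs}|L_{I_{k}}(\tilde{\thetav}_{I_{k}},\tilde{\thetav}_{I_{l}})|^{qr}\le C\,(m_{k}/m_{l})^{qr}$, the growth factor $(m_{k}/m_{l})^{qr}$ reflecting that the short-interval estimate is extrapolated to the longer interval $I_{k}$.

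For the false-alarm probability I would use the decomposition of the test statistic already recorded in the text, $T_{I,\tcp}=L_{J}(\tilde{\thetav}_{J},\thetavs)+L_{J^{c}}(\tilde{\thetav}_{J^{c}},\thetavs)-L_{I}(\tilde{\thetav}_{I},\thetavs)\le L_{J}(\tilde{\thetav}_{J},\thetavs)+L_{J^{c}}(\tilde{\thetav}_{J^{c}},\thetavs)$. A union bound over $\tcp\in\Tcp(I_{l+1})$ (of cardinality at most $m_{l}$), together with the splitting $\{A+B>\zz\}\subseteq\{A>\zz/2\}\cup\{B>\zz/2\}$ for nonnegative $A,B$ and the exponential bound (\ref{thm21eq2}) applied on $J$ and on $J^{c}$, gives $\P_{\thetavs}(B_{l})\le 2\,m_{l}\exp\{\lexpB(\lambda,\thetavs)-\lambda\zz_{l+1}/2\}$. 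Inserting the proposed $\zz_{l+1}$ produces the three factors $\alpn^{\lambda a_{0}r/2}$, $(m_{l}/m_{\K})^{\lambda a_{1}r/2}$ and $m_{l+1}^{-\lambda a_{2}/2}$. Pairing the loss factor $(m_{k}/m_{l})^{r}\le(m_{\K}/m_{l})^{r}$ against $(m_{l}/m_{\K})^{\lambda a_{1}r/(2p)}$ and the cardinality $m_{l}$ against $m_{l+1}^{-\lambda a_{2}/(2p)}$, I would choose $a_{1},a_{2}$ large enough relative to $\lambda,p$ that each $l$-term decays geometrically, and $a_{0}$ large enough that $\alpn^{\lambda a_{0}r/(2p)}$ forces each term below $(\alpn/\K)\Crlp_{r}(\thetavs)$; summing the at most $k$ nonzero terms then yields $\alpn(k/\K)\Crlp_{r}(\thetavs)=\alpn_{k}\Crlp_{r}(\thetavs)$, which is (\ref{defcv}).

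The \emph{main obstacle} is the loss-moment estimate of the second paragraph: making the growth $(m_{k}/m_{l})^{qr}$ rigorous, rather than merely heuristic, requires the full finite-sample quadratic-expansion control from Spokoiny (2009) that stands behind Theorem~\ref{TGARCH11}, precisely because the estimate $\tilde{\thetav}_{I_{l}}$ built on the short interval is evaluated on the much longer interval $I_{k}$ and the remainder of the Taylor expansion must be dominated uniformly. A secondary, though routine, difficulty is the exponential bound for the supremum over change points in the third paragraph, where the loss in the constant $\lambda$ from the union bound and the $\zz/2$ splitting must be absorbed into the final choice of $a_{0},a_{1},a_{2}$.
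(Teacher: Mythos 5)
Your proposal matches the paper's own proof essentially step for step: the same decomposition over the false-alarm (stopping) events, the same splitting of each term into a loss-moment factor and a false-alarm probability (the paper uses Cauchy--Schwarz, i.e.\ your H\"older split with \(p=q=2\)), the same moment bound \(\E_{\thetavs}|L_{I_{\K}}(\tilde{\thetav}_{I_{\K}},\tilde{\thetav}_{I_{k-1}})|^{2r}\le (m_{\K}/m_{k-1})^{2r}\Crlp_{2r}^{*}(\thetavs)\), and the same false-alarm bound via the test-statistic decomposition, \(\zz/2\)-splitting, union bound, and the exponential inequality (\ref{thm21eq2}), followed by the same choice of constants. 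The step you flag as the main obstacle --- the extrapolated loss-moment bound for the short-interval estimate evaluated on the long interval --- is exactly the step the paper also does not prove in detail, asserting it ``follows similarly to (\ref{THM21EQ3})'' under the conditions of Theorem~\ref{TGARCH11}, i.e.\ by appeal to the finite-sample theory of Spokoiny (2009).
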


Since \( \K \) and \( \{m_{k}\}_{k=1}^{\K} \) are fixed, 
the \( \zz_{k} \)'s in Theorem~\ref{TCVGARCH} have a form 
\( \zz_{k} = C + D \log(m_{k}) \) for \( k=1,\ldots,\K \) 
with some constant \( C \) and \( D \). 
However, a practically relevant choice of these constants has to be done by
Monte-Carlo simulations.
Note first that every particular choice of the coefficients \( C \) and \( D \) determines the 
whole set of the critical values \( \{\zz_{k}\}_{k=1}^{\K} \) and thus the local change-point procedure. 
For the critical values given by fixed \( (C,D) \), one can run the procedure and observe 
its performance on the simulated data using the data-generating process 
(\ref{timemodel})--(\ref{Xtparam}); in particular, one can check whether the
condition (\ref{defcv}) is fulfilled. For any (sufficiently large) fixed value of \( C \), 
one can thus find the minimal value \( D(C) < 0 \) of \( D \) that ensures (\ref{defcv}). 
Every corresponding set of critical values in the form \( \zz_{k} = C + D(C) \log(m_{k}) \) is 
admissible. The condition \( D(C) < 0 \) ensures that the critical values decreases 
with \( k \). This reflects the fact that a false alarm at an early stage of the 
algorithm is more crucial because it leads to the choice of a highly variable 
estimate. The critical values \( \zz_{k} \) for small \( k \) should thus be rather 
conservative to provide the stability of the algorithm in the parametric situation.
To determine \( C \), the value \( \zz_{1} \) can be fixed by considering the false alarm 
at the first step of the procedure, which leads to estimation using the smallest  
interval \( I_{0} \) instead of the ``ideal'' largest interval \( I_{\K} \).
The related condition (used in Section \ref{Sfincv}) reads as 
\begin{eqnarray}\label{eqfixcv}
    \E_{\thetavs} \bigl| 
        L_{I_{\K}}(\tilde{\thetav}_{I_{\K}},\tilde{\thetav}_{I_{0}}) 
    \bigr|^{r} \bb{1}(T_{I_{1},\Tcp(I_{1})} > \zz_{1})
    \le
    \alpn \Crlp_{r}(\thetavs) / \K .
\end{eqnarray}    
Alternatively, one could select a pair \( (C,D) \) that minimizes the resulting
prediction error, see Section~\ref{Ssrrho}.

\subsection{Selecting parameters \( r \) and \( \alpn \)}
\label{Ssrrho}

The choice of critical values using inequality (\ref{defcv}) additionally
depends on two ``metaparameters'' \( r \)  and \( \alpn \). A simple strategy
is to use conservative values for these parameters and the corresponding set
of critical values (e.g., our default is \( r=1 \) and \( \alpn=1 \)). 
On the other hand, the two parameters are global in the
sense that they are independent of \( \TT \). Hence, one can also determine
them in a data-driven way by minimizing some global forecasting error
(Cheng et al., 2003). Different values of \( r \) and \( \alpn \) may lead to
different sets of critical values and hence to
different estimates \( \hat{\thetav}^{(r,\alpn)}(\TT) \) and to different
forecasts \( \hat{Y}_{\TT+h|\TT}^{(r,\alpn)} \) of the future values \( Y_{\TT+h} \),
where \( h \) is the forecasting horizon.
Now, a data-driven choice of \( r \) and \( \alpn \) can be done by
minimizing the following objective function:
\begin{eqnarray}
\label{eqforerr}
    (\hat{r},\hat{\alpn})
    =
    \argmin_{r>0,\alpn>0} PE_{\Lambda,\cc{H}}(r,\alpn)
    =
    \argmin_{r,\alpn} \sum_{\TT} \sum_{h \in \cc{H}}
    \Lambda\bigl(  Y_{\TT+h}, \hat{Y}_{\TT+h|\TT}^{(r,\alpn)}  \bigr),
\end{eqnarray}
where \( \Lambda \)  is a loss function and \( \cc{H} \) is the
forecasting horizon set. For example, one can take
\( \Lambda_r(\upsilon,\upsilon')=|\upsilon-\upsilon'|^{r} \)
for \( r \in [1/2,2] \). 
For daily data, the forecasting horizon could be one day, \( \cc{H}=\{1\} \),
or two weeks, \( \cc{H}=\{1,\ldots,10\} \).

\section{Theoretic properties}
\label{Sresults}

In this section, we collect basic results describing the quality of the proposed
adaptive procedure. First, the definition of the procedure ensures the
performance prescribed by (\ref{defcv}) in the parametric situation. We however
claimed that the adaptive pointwise estimation applies even if the process \(
Y_{t} \) is only locally approximated by a parametric model. Therefore, we now
define locally ``nearly parametric'' process, for which we derive an analogy of
Theorem \ref{TGARCH11} (Section \ref{secsmb}). Later, we prove certain
``oracle'' properties of the proposed method (Section \ref{secoracle}).

\subsection{Small modeling bias condition}\label{secsmb}

This section discusses the concept of ``nearly parametric'' case. To define it
rigorously, we have to quantify the quality of approximating the true latent process \(
X_{t} \), which drives the observed data \( Y_{t} \) due to (\ref{timemodel}), 
by the parametric process \( X_{t}(\thetav) \) described by (\ref{Xtparam}) for
some \(\thetav \in \Theta\). 
Below we assume that the innovations \( \varepsilon_{t} \) in the model (\ref{timemodel}) 
are independent and identically distributed and denote the distribution of 
\( \sqrt{\upsilon} \varepsilon_{t} \) by \( P_{\upsilon} \) so that the conditional 
distribution of \( Y_{t} \) given \( \cc{F}_{t-1} \) is \( P_{g(X_{t})} \). 
%
To measure the distance of a data-generating process from a parametric model, we introduce 
for every interval \( I_{k} \in \IK \) and every parameter \( \thetav \in \Theta \) the random quantity
\begin{eqnarray*}
    \Delta_{I_{k}}(\thetav)
    =
    \sum_{t \in I_{k}} \kullb\{ g(X_{t}), g[X_{t}(\thetav)] \},
\end{eqnarray*}
where \( \kullb(\upsilon,\upsilon') \) denotes the Kullback-Leibler distance between 
\( P_{\upsilon} \)  and \( P_{\upsilon'} \).
For CH models with Gaussian innovations \( \eps_{t} \),
\( \kullb(\upsilon,\upsilon') = -0.5 \{ \log(\upsilon/\upsilon') + 1 - \upsilon/\upsilon' \} \).
In the parametric case with \( X_{t} = X_{t}(\thetavs) \), we clearly have
\( \Delta_{I_{k}}(\thetavs) = 0 \). To characterize the ``nearly parametric case,''
we introduce {small modeling bias} (SMB) condition, which simply means
that, for some \( \thetav \in \Theta \),
\( \Delta_{I_{k}}(\thetav) \) is bounded by a small constant with a high probability.
Informally, this means that the ``true'' model can be well approximated on the interval
\( I_{k} \) by the parametric one with the parameter \( \thetav \). 
The best parametric fit (\ref{Xtparam}) to the underlying model (\ref{timemodel}) 
on \( I_{k} \) can be defined by 
minimizing the value \( \E \Delta_{I_{k}}(\thetav) \) over \( \thetav \in \Theta \)
and \( \tilde{\thetav}_{I_{k}} \) can be viewed as its estimate.

The following theorem claims that the results on the accuracy of estimation
given in Theorem~\ref{TGARCH11} can be extended from the parametric case to the
general nonparametric situation under the SMB condition.
Let \( \losst(\hat{\thetav},\thetav) \) be any loss function
for an estimate \( \hat{\thetav} \).

\begin{theorem}
\label{Triskbound}
Let for some \( \thetav \in \Theta \) and some  \( \Delta \ge 0 \)
\begin{eqnarray}
\label{DeltaI}
\E \Delta_{I_{k}}(\thetav)
\le \Delta.
\end{eqnarray}
Then it holds for an estimate \( \hat{\thetav} \) constructed from the observations
\( \{ Y_{t} \}_{t \in I_{k}} \) that
\begin{eqnarray*}
    \E \log \bigl(
        1 + \losst(\hat{\thetav},\thetav) / \E_{\thetav} \losst(\hat{\thetav},\thetav)
    \bigr)
    \le
    1 + \Delta.
\end{eqnarray*}
\end{theorem}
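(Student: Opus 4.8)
The plan is to transfer the parametric risk bound to the true data-generating measure $\P$ by a change-of-measure (entropy) argument, exploiting that $\hat{\thetav}$ — and hence $\losst(\hat{\thetav},\thetav)$ — is a measurable function of the observations $\{Y_{t}\}_{t\in I_{k}}$ alone. Write $a = \E_{\thetav}\losst(\hat{\thetav},\thetav)$ for the parametric risk. The boundary cases are trivial (if $a=\infty$ the left-hand side is $0$; if $a=0$ the loss vanishes $\P_{\thetav}$-a.s.\ and, by the absolute continuity established below, also $\P$-a.s.), so I assume $0<a<\infty$. The target then reduces to bounding $\E\log\bigl(1+\losst(\hat{\thetav},\thetav)/a\bigr)$.

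The central tool is the Donsker--Varadhan / Gibbs variational inequality: for any random variable $\eta$ measurable with respect to $\{Y_{t}\}_{t\in I_{k}}$,
\[
    \E\,\eta \;\le\; \kullb(\P,\P_{\thetav}) + \log \E_{\thetav}\exp(\eta).
\]
I would apply this with $\eta = \log\bigl(1+\losst(\hat{\thetav},\thetav)/a\bigr)$, so that $\exp(\eta) = 1+\losst(\hat{\thetav},\thetav)/a$ and, by the very choice of the normalisation $a$,
\[
    \E_{\thetav}\exp(\eta) = 1 + \E_{\thetav}\losst(\hat{\thetav},\thetav)/a = 2 .
\]
Hence $\log \E_{\thetav}\exp(\eta) = \log 2 < 1$, and it remains only to control the divergence term $\kullb(\P,\P_{\thetav})$.

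For that I would identify $\kullb(\P,\P_{\thetav})$ with $\E\Delta_{I_{k}}(\thetav)$ through the predictable factorisation of the likelihood ratio. Conditionally on $\cc{F}_{t-1}$, $Y_{t}$ has law $P_{g(X_{t})}$ under $\P$ and $P_{g(X_{t}(\thetav))}$ under $\P_{\thetav}$; since both measures share the same pre-history before $I_{k}$ and $X_{t}(\thetav)$ is predictable, the Radon--Nikodym derivative of $\P$ with respect to $\P_{\thetav}$ restricted to $\{Y_{t}\}_{t\in I_{k}}$ factorises into the product over $t\in I_{k}$ of the ratios of these conditional densities (finiteness of the divergence, guaranteed below, yields $\P\ll\P_{\thetav}$). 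Taking $\E\log$ and applying the tower property, the conditional expectation of each log-ratio is exactly the scalar divergence $\kullb\{g(X_{t}),g(X_{t}(\thetav))\}$, whence
\[
    \kullb(\P,\P_{\thetav}) = \E\sum_{t\in I_{k}}\kullb\{g(X_{t}),g(X_{t}(\thetav))\} = \E\,\Delta_{I_{k}}(\thetav) \le \Delta
\]
by the hypothesis (\ref{DeltaI}). Combining this with the two previous displays gives $\E\log\bigl(1+\losst(\hat{\thetav},\thetav)/a\bigr)\le \Delta + \log 2 \le 1+\Delta$, which is the assertion.

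I expect the main obstacle to be the rigorous justification of the divergence identity rather than the variational step: one must verify that $X_{t}(\thetav)$ is genuinely predictable and that $\P$ and $\P_{\thetav}$ can be coupled on a common filtration conditioned on the same past, so that the chain rule for relative entropy applies term by term and reproduces $\E\,\Delta_{I_{k}}(\thetav)$. Care is also needed with measurability and integrability to legitimise the entropy inequality, together with the degenerate values of $a$. Once the change of measure is in place, the analytic core — the normalisation making $\E_{\thetav}\exp(\eta)=2$ and the slack $\log 2\le 1$ — is entirely elementary.
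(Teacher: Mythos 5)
Your proof is correct and is essentially the paper's own argument in different packaging: the paper's Lemma~\ref{Linfbound}, proved there via the Fenchel--Young inequality \( xy \le x \log x - x + e^{y} \), is exactly the Gibbs/Donsker--Varadhan inequality you invoke, and your identification of \( \kullb(\P,\P_{\thetav}) \) with \( \E \Delta_{I_{k}}(\thetav) \) through the predictable factorisation of the likelihood ratio and the tower property is precisely the paper's final display. The only substantive difference is that you apply the variational inequality directly to \( \eta = \log\bigl(1 + \losst(\hat{\thetav},\thetav)/a\bigr) \), keeping the logarithm outside the \( \P_{\thetav} \)-expectation, which yields the marginally sharper constant \( \Delta + \log 2 \) where the paper's linearisation (bounding \( e^{\eta} \) by \( 1+\zeta \) inside the expectation and using \( \E_{\thetav}\zeta = 1 \)) gives \( \Delta + 1 \).
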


This general result applied to the quasi-MLE estimation with the loss
function \( L_{I} (\tilde{\thetav}_{I} , \thetav) \)
yields the following corollary.

\begin{corollary}
\label{CSMBGARCH}
Let the SMB condition (\ref{DeltaI}) hold for some interval \( I_{k} \) and
\( \thetav \in \Theta \). Then 
\begin{eqnarray*}
    \E \log \Bigl(
        1 + \bigl| L_{I_k} (\tilde{\thetav}_{I_k} , \thetav) \bigr|^{r} / \Crlp_{r}(\thetav)
    \Bigr)
    \le
    1 + \Delta ,
\end{eqnarray*}
where \( \Crlp_{r}(\thetav) \) is the parametric risk bound from (\ref{THM21EQ3}).
\end{corollary}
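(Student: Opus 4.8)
The plan is to derive Corollary~\ref{CSMBGARCH} by combining the abstract change-of-measure bound in Theorem~\ref{Triskbound} with the parametric moment bound already established in Theorem~\ref{TGARCH11}. The key observation is that Corollary~\ref{CSMBGARCH} is simply Theorem~\ref{Triskbound} instantiated with one specific choice of loss function, namely
\begin{eqnarray*}
    \losst(\hat{\thetav},\thetav)
    =
    \bigl| L_{I_{k}}(\tilde{\thetav}_{I_{k}},\thetav) \bigr|^{r},
\end{eqnarray*}
evaluated at the quasi-MLE $\hat{\thetav} = \tilde{\thetav}_{I_{k}}$. So first I would verify that this is a legitimate loss function (it is nonnegative and measurable, which is all Theorem~\ref{Triskbound} requires) and that all the hypotheses of Theorem~\ref{Triskbound} are in force: the SMB condition (\ref{DeltaI}) is assumed in the statement of the corollary for this very $I_{k}$ and $\thetav$, and the underlying CH-model assumptions of Theorem~\ref{TGARCH11} carry over since we are working under them throughout Section~\ref{secsmb}.

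The crucial step is to identify the normalizing denominator $\E_{\thetav}\losst(\hat{\thetav},\thetav)$ appearing in Theorem~\ref{Triskbound}. With the loss above this denominator is exactly $\E_{\thetav}\bigl|L_{I_{k}}(\tilde{\thetav}_{I_{k}},\thetav)\bigr|^{r}$, which is the expectation computed \emph{under the parametric measure} $\P_{\thetav}$ (the subscript $\thetav$ in $\E_{\thetav}$ is what makes the SMB machinery work: Theorem~\ref{Triskbound} measures the true-process risk against the parametric-process risk). By the polynomial risk bound (\ref{THM21EQ3}) of Theorem~\ref{TGARCH11}, applied at the parameter $\thetav$ playing the role of $\thetavs$, this quantity is bounded by the parametric risk constant:
\begin{eqnarray*}
    \E_{\thetav} \bigl| L_{I_{k}}(\tilde{\thetav}_{I_{k}},\thetav) \bigr|^{r}
    \le
    \Crlp_{r}(\thetav).
\end{eqnarray*}

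Having both ingredients, I would substitute into the conclusion of Theorem~\ref{Triskbound}. Since $x \mapsto \log(1+x)$ is monotone increasing, enlarging the denominator from $\E_{\thetav}\losst$ up to its upper bound $\Crlp_{r}(\thetav)$ only decreases the argument, so
\begin{eqnarray*}
    \E \log \Bigl(
        1 + \bigl| L_{I_{k}}(\tilde{\thetav}_{I_{k}},\thetav) \bigr|^{r} / \Crlp_{r}(\thetav)
    \Bigr)
    \le
    \E \log \Bigl(
        1 + \losst(\hat{\thetav},\thetav) / \E_{\thetav}\losst(\hat{\thetav},\thetav)
    \Bigr)
    \le
    1 + \Delta,
\end{eqnarray*}
which is precisely the assertion. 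I expect the main obstacle to be bookkeeping rather than analysis: one must be careful that the outer expectation $\E$ on the left is taken under the \emph{true} (nonparametric) data-generating law while the denominator $\Crlp_{r}(\thetav)$ comes from the \emph{parametric} law $\P_{\thetav}$, so the monotonicity substitution must respect which measure controls which factor. A secondary point worth checking is that $\Crlp_{r}(\thetav) > 0$ so that the division is well defined and the replacement genuinely shrinks the logarithm's argument; this follows because $\Crlp_{r}(\thetav)$ is a legitimate (strictly positive) risk bound. Beyond these two measure-theoretic checks the argument is a direct specialization, with no further estimates required.
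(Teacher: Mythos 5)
Your proof is correct and follows essentially the same route as the paper, which simply states that the corollary is Theorem~\ref{Triskbound} specialized to the log-likelihood loss. You in fact supply the two details the paper leaves implicit --- taking the loss to be the $r$-th power $\bigl|L_{I_{k}}(\tilde{\thetav}_{I_{k}},\thetav)\bigr|^{r}$ and using the bound (\ref{THM21EQ3}) together with the monotonicity of $\log(1+x)$ to replace the denominator $\E_{\thetav}\losst$ by $\Crlp_{r}(\thetav)$ --- and both steps are handled correctly.
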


This result shows that the estimation loss
\( | L_{I} (\tilde{\thetav}_{I} , \thetav) |^{r} \) normalized by
the parametric risk \( \Crlp_{r}(\thetav) \) is stochastically bounded by a
constant proportional to \( e^{\Delta} \). If \( \Delta \) is not large, this
result extends the parametric risk bound (Theorem~\ref{TGARCH11}) to the
nonparametric situation under the SMB condition. Another implication of
Corollary~\ref{CSMBGARCH} is that the confidence set built for the parametric model
(Corollary~\ref{CGARCH11}) continues to hold, with a slightly smaller coverage
probability, under SMB.

\subsection{The ``oracle'' choice and the ``oracle'' result}\label{secoracle}

Corollary~\ref{CSMBGARCH} suggests that the ``optimal'' or ``oracle''
choice of the interval \( I_{k} \) from the set \( I_{1},\ldots,I_{\K} \) can be
defined as the largest interval for which the SMB condition (\ref{DeltaI}) still
holds (for a given small \( \Delta>0 \)). For such an interval, one can neglect
deviations of the underlying process 
from a parametric model with a fixed parameter~\( \thetav \).
Therefore, we say that the choice \( \ko \) is the ``oracle'' choice if there exists 
\( \thetav \in \Theta \) such that
\begin{eqnarray}
\label{koGARCH}
    \E \Delta_{I_{\ko}}(\thetav)
    \le \Delta,
\end{eqnarray}
for a fixed \( \Delta > 0 \) and that (\ref{koGARCH}) does not hold for \( k > \ko \).
Unfortunately, the underlying process \( X_{t} \) and hence, the value 
\( \Delta_{I_{k}} \) is unknown and the oracle choice cannot be implemented.
The proposed adaptive procedure tries to mimic this oracle on the basis of available data using 
the sequential test of homogeneity. The final oracle result claims that the adaptive 
estimate  provides the same (in order) accuracy as the oracle one.

By construction, the pointwise adaptive procedure described in Section~\ref{Sprocedure}
provides the prescribed performance if the underlying process follows the parametric
model (\ref{Yproc}). Now, condition (\ref{defcv}) combined with 
Theorem~\ref{Triskbound} implies similar performance in the first \( \ko \) steps 
of the adaptive estimation procedure.

\begin{theorem}
\label{TpropGARCH}
Let \(  \thetav \in \Theta  \) and \( \Delta > 0 \) be such that 
\( \E \Delta_{I_{\ko}}(\thetav) \le \Delta \) for some \( \ko \le \K \). 
Also let 
\( \max_{k \le \ko} \E_{\thetav} | L_{I_{k}}( \tilde{\thetav}_{I_{k}}, \thetav ) |^{r} 
\le \Crlp_{r}(\thetav) \). Then
\begin{eqnarray*} 
    \E \log\biggl(
        1 +
        \frac{\bigl| L_{I_{\ko}}\bigl( \tilde{\thetav}_{I_{\ko}}, \thetav \bigr) \bigr|^{r}}{\Crlp_{r}(\thetav)}
    \biggr)
     \le 
    1 + \Delta
    \mbox{~~~and~~~} 
    \E \log\biggl(
        1 +
        \frac{\bigl| L_{I_{\ko}}\bigl( \tilde{\thetav}_{I_{\ko}}, \hat{\thetav}_{I_{\ko}} \bigr) \bigr|^{r}}
        {\Crlp_{r}(\thetav)}
    \biggr)
    \le 
    \alpn + \Delta.
\end{eqnarray*} 
\end{theorem}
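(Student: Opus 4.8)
The plan is to obtain both inequalities by transferring a bound that holds in the pure parametric model \( \P_{\thetav} \) to the true (nearly parametric) distribution \( \P \), using the change-of-measure argument that already underlies Theorem~\ref{Triskbound}. The engine is the entropy (Donsker--Varadhan) inequality: for any integrable functional \( \eta \) of \( \{Y_{t}\}_{t \in I_{\ko}} \),
\begin{eqnarray*}
    \E \eta
    \le
    \kullb(\P \| \P_{\thetav})
    + \log \E_{\thetav} \ex^{\eta},
\end{eqnarray*}
combined with the identity \( \kullb(\P \| \P_{\thetav}) = \E \Delta_{I_{\ko}}(\thetav) \) on the \( \sigma \)-field generated by \( \{Y_{t}\}_{t\in I_{\ko}} \), which follows from the chain rule for the Kullback--Leibler divergence applied to the predictable conditional laws \( P_{g(X_{t})} \) and \( P_{g(X_{t}(\thetav))} \). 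By the small-modeling-bias hypothesis \( \E \Delta_{I_{\ko}}(\thetav) \le \Delta \) this divergence is at most \( \Delta \), so the whole game reduces to controlling \( \log \E_{\thetav} \ex^{\eta} \) for a suitable \( \eta \).

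For the first bound I would take \( \eta = \log\bigl( 1 + |L_{I_{\ko}}(\tilde{\thetav}_{I_{\ko}},\thetav)|^{r}/\Crlp_{r}(\thetav) \bigr) \). Then \( \E_{\thetav} \ex^{\eta} = 1 + \E_{\thetav}|L_{I_{\ko}}(\tilde{\thetav}_{I_{\ko}},\thetav)|^{r}/\Crlp_{r}(\thetav) \le 2 \) by the hypothesis \( \max_{k\le\ko}\E_{\thetav}|L_{I_{k}}(\tilde{\thetav}_{I_{k}},\thetav)|^{r} \le \Crlp_{r}(\thetav) \) (equivalently the parametric bound (\ref{THM21EQ3})), whence \( \log \E_{\thetav}\ex^{\eta} \le \log 2 \le 1 \) and the entropy inequality gives \( \E\eta \le 1 + \Delta \). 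This is exactly Corollary~\ref{CSMBGARCH} instantiated at the interval \( I_{\ko} \), so the first display is essentially immediate.

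The second bound is the substantive one. Here \( \hat{\thetav}_{I_{\ko}} \) is the data-driven estimate produced by the first \( \ko \) steps of the procedure, so \( L_{I_{\ko}}(\tilde{\thetav}_{I_{\ko}},\hat{\thetav}_{I_{\ko}}) \) is a nonnegative functional of \( \{Y_{t}\}_{t\in I_{\ko}} \). In the parametric model the critical values were calibrated so that (\ref{defcv}) holds; at \( k = \ko \) this reads \( \E_{\thetav}|L_{I_{\ko}}(\tilde{\thetav}_{I_{\ko}},\hat{\thetav}_{I_{\ko}})|^{r} \le \alpn_{\ko}\Crlp_{r}(\thetav) \) with \( \alpn_{\ko} = \alpn\ko/\K \le \alpn \). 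Taking now \( \eta = \log\bigl( 1 + |L_{I_{\ko}}(\tilde{\thetav}_{I_{\ko}},\hat{\thetav}_{I_{\ko}})|^{r}/\Crlp_{r}(\thetav) \bigr) \) gives \( \E_{\thetav}\ex^{\eta} = 1 + \E_{\thetav}|L_{I_{\ko}}(\tilde{\thetav}_{I_{\ko}},\hat{\thetav}_{I_{\ko}})|^{r}/\Crlp_{r}(\thetav) \le 1 + \alpn \), so by \( \log(1+x)\le x \) we get \( \log\E_{\thetav}\ex^{\eta} \le \alpn \) and the entropy inequality yields \( \E\eta \le \alpn + \Delta \). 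The decisive point is that the loss is normalized by the parametric risk \( \Crlp_{r}(\thetav) \) and \emph{not} by \( \E_{\thetav}|L_{I_{\ko}}(\tilde{\thetav}_{I_{\ko}},\hat{\thetav}_{I_{\ko}})|^{r} \): this keeps \( \E_{\thetav}[\ex^{\eta}-1] \) as small as the false-alarm level \( \alpn_{\ko} \), which is precisely what sharpens the constant from \( 1 \) (as in the first bound, and as a black-box use of Theorem~\ref{Triskbound} would give) to \( \alpn \).

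The only genuinely delicate ingredients are ones I would isolate rather than recompute. First, the identity \( \kullb(\P \| \P_{\thetav}) = \E\Delta_{I_{\ko}}(\thetav) \) requires care with the predictable conditional structure of the volatility recursion and with the role of the pre-history of \( I_{\ko} \); this is the bridge between the abstract divergence and \( \Delta_{I_{\ko}} \) already used for Theorem~\ref{Triskbound}. Second, the validity of (\ref{defcv}) at the intermediate index \( \ko \), i.e.\ that the calibrated critical values control the false-alarm loss not only at the terminal step \( \K \) but at every \( k \le \ko \) with level \( \alpn_{k} = \alpn k/\K \), which is supplied by Theorem~\ref{TCVGARCH}. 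Granting these, the measurability of \( L_{I_{\ko}}(\tilde{\thetav}_{I_{\ko}},\hat{\thetav}_{I_{\ko}}) \) with respect to \( \{Y_{t}\}_{t\in I_{\ko}} \) is immediate, since it is built from \( \tilde{\thetav}_{I_{1}},\ldots,\tilde{\thetav}_{I_{\ko}} \), all estimated inside \( I_{\ko} \), and both displays then follow from the one-line entropy argument above.
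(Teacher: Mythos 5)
Your proposal is correct and follows essentially the same route as the paper: the Donsker--Varadhan entropy inequality you invoke is exactly the paper's Lemma~\ref{Linfbound} (proved there via \( xy \le x\log x - x + \ex^{y} \)) together with the identity \( \E \log (d\P/d\P_{\thetav}) = \E\Delta_{I_{\ko}}(\thetav) \), and your two instantiations reproduce the paper's proof --- the first display is Corollary~\ref{CSMBGARCH}, the second combines condition (\ref{defcv}) at \( k = \ko \) with \( \log(1+x) \le x \). The only cosmetic difference is that your formulation yields the marginally sharper intermediate bound \( \log(1+\alpn) \) in place of \( \alpn \), which is immaterial.
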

Similarly to the parametric case, 
under the SMB condition \( \E \Delta_{I_{\ko}}(\thetav) \le \Delta \), any choice 
\( \hat{k} < \ko \) can be viewed as a false alarm.
Theorem \ref{TpropGARCH} documents that the loss induced by such a false alarm at the first 
\( \ko \) steps and measured by
\( L_{I_{\ko}}( \tilde{\thetav}_{I_{\ko}}, \hat{\thetav}_{I_{\ko}} ) \)
is of the same magnitude as the loss 
\( L_{I_{\ko}}( \tilde{\thetav}_{I_{\ko}}, \thetav)  \)
of estimating the parameter \( \thetav \) from the SMB (\ref{koGARCH}) by 
\( \tilde{\thetav}_{I_{\ko}} \). Thus under (\ref{koGARCH}),
the adaptive estimation during steps \( k \le\ko \) does not induce
larger errors into estimation than the quasi-MLE estimation itself.

For further steps of the algorithm with  \( k > \ko \), where (\ref{koGARCH})
does not hold, the value \( \Delta' = \E \Delta_{I_{k}}(\thetav) \) can be large and the
bound for the risk becomes meaningless due to the factor \( e^{\Delta'} \). To
establish the result about the quality of the final estimate, we thus have to
show that the quality of estimation cannot be 
destroyed at the steps \( k > \ko \).
The next ``oracle'' result states the final quality of our adaptive estimate 
\( \hat{\thetav} \).


\begin{theorem}
\label{TstabGARCH}
Let \( \E \Delta_{I_{\ko}}(\thetav) \le \Delta \) for some \( \ko \le K \).
Then \( L_{I_{\ko}}(\tilde{\thetav}_{I_{\ko}}, \hat{\thetav}) \bb{1}(\hat{k} \ge \ko) \le 
\zz_{\ko} \) yielding
\begin{eqnarray*}
    \E \log\biggl(
        1 +
        \frac{\bigl| L_{I_{\ko}}\bigl( \tilde{\thetav}_{I_{\ko}}, \hat{\thetav} \bigr) \bigr|^{r}}
        {\Crlp_{r}(\thetav)}
    \biggr)
    \le 
    \alpn + \Delta + \log\biggl( 1 + \frac{\zz_{\ko}^{r}}{\Crlp_{r}(\thetav)} \biggr).
\end{eqnarray*}    
\end{theorem}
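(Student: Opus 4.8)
The plan is to split the analysis according to whether the adaptively selected index $\hat{k}$ over- or under-shoots the oracle index $\ko$, since the two regimes are governed by different mechanisms. On the event $\{\hat{k} < \ko\}$ the procedure has already terminated before reaching step $\ko$, so by the bookkeeping in step (B3) the running estimate after $\ko$ steps coincides with the final one, $\hat{\thetav}_{I_{\ko}} = \hat{\thetav}$. Hence $L_{I_{\ko}}(\tilde{\thetav}_{I_{\ko}},\hat{\thetav}) = L_{I_{\ko}}(\tilde{\thetav}_{I_{\ko}},\hat{\thetav}_{I_{\ko}})$ on this event, and its contribution is already controlled by the second bound of Theorem~\ref{TpropGARCH}: dropping the indicator (the integrand is non-negative) gives
\[
\E\Bigl[ \log\bigl( 1 + | L_{I_{\ko}}(\tilde{\thetav}_{I_{\ko}},\hat{\thetav}_{I_{\ko}}) |^{r}/\Crlp_{r}(\thetav) \bigr)\, \bb{1}(\hat{k} < \ko) \Bigr] \le \alpn + \Delta.
\]
It then remains to handle the over-shooting event $\{\hat{k} \ge \ko\}$, which is exactly what the stated deterministic inequality $L_{I_{\ko}}(\tilde{\thetav}_{I_{\ko}},\hat{\thetav})\bb{1}(\hat{k}\ge\ko) \le \zz_{\ko}$ is designed for.

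To prove that deterministic inequality I would exploit additivity of the log-likelihood over disjoint time intervals together with the acceptance of $I_{\hat{k}}$. On $\{\hat{k}\ge\ko\}$ we have $I_{\ko}\subseteq I_{\hat{k}}$; write $I_{\hat{k}} = J \cup I_{\ko}$ with $J = [\TT - m_{\hat{k}}+1,\ \TT - m_{\ko}]$, i.e.\ split $I_{\hat{k}}$ at the change point $\tcp = \TT - m_{\ko}$, so that $J^{c} = I_{\ko}$ exactly. Using $L_{I_{\hat{k}}}(\cdot) = L_{J}(\cdot) + L_{I_{\ko}}(\cdot)$ and the definition of the likelihood-ratio statistic, one gets the identity
\[
T_{I_{\hat{k}},\tcp} = \bigl\{ L_{I_{\ko}}(\tilde{\thetav}_{I_{\ko}}) - L_{I_{\ko}}(\tilde{\thetav}_{I_{\hat{k}}}) \bigr\} + \bigl\{ L_{J}(\tilde{\thetav}_{J}) - L_{J}(\tilde{\thetav}_{I_{\hat{k}}}) \bigr\}.
\]
Both braces are non-negative, since $\tilde{\thetav}_{I_{\ko}}$ and $\tilde{\thetav}_{J}$ maximise $L_{I_{\ko}}$ and $L_{J}$ respectively; hence $L_{I_{\ko}}(\tilde{\thetav}_{I_{\ko}},\hat{\thetav}) = L_{I_{\ko}}(\tilde{\thetav}_{I_{\ko}},\tilde{\thetav}_{I_{\hat{k}}}) \le T_{I_{\hat{k}},\tcp}$. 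Provided the split point $\tcp = \TT - m_{\ko}$ lies in the tested set $\Tcp(I_{\hat{k}})$ (which holds for the interior change-point set used in the theory), this is at most $T_{I_{\hat{k}},\Tcp(I_{\hat{k}})}$, and since $I_{\hat{k}}$ is accepted we have $T_{I_{\hat{k}},\Tcp(I_{\hat{k}})}\le \zz_{\hat{k}}$. Finally, because the critical values decrease in the step index (the constant $D<0$ of Section~\ref{Slambda}) and $\hat{k}\ge\ko$, we conclude $\zz_{\hat{k}}\le\zz_{\ko}$, which gives the claim.

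With the deterministic bound in hand, the contribution of $\{\hat{k}\ge\ko\}$ is at most $\log\bigl(1+\zz_{\ko}^{r}/\Crlp_{r}(\thetav)\bigr)\,\P_{\thetav}(\hat{k}\ge\ko)\le \log\bigl(1+\zz_{\ko}^{r}/\Crlp_{r}(\thetav)\bigr)$, and summing the two contributions yields the asserted bound $\alpn + \Delta + \log\bigl(1+\zz_{\ko}^{r}/\Crlp_{r}(\thetav)\bigr)$. The only genuinely delicate point is the deterministic stability inequality: the algebraic identity for $T_{I_{\hat{k}},\tcp}$ and the non-negativity of its two pieces are routine, so the real content is (i) ensuring that the oracle split $\tcp=\TT-m_{\ko}$ is among the tested change points $\Tcp(I_{\hat{k}})$ for every accepted $\hat{k}\ge\ko$, and (ii) the monotonicity $\zz_{\hat{k}}\le\zz_{\ko}$ that lets the single critical value $\zz_{\ko}$ dominate the bound no matter how far $\hat{k}$ over-shoots.
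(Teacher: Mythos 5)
Your proposal is correct and follows essentially the same route as the paper's proof: the heart of both arguments is the identical deterministic stability inequality obtained by splitting \(I_{\hat{k}}\) at the oracle boundary \(\tcp = \TT - m_{\ko}\), using additivity and non-negativity of the two likelihood-ratio pieces, the acceptance bound \(T_{I_{\hat k},\Tcp(I_{\hat k})} \le \zz_{\hat k}\), and the monotonicity \(\zz_{\hat k} \le \zz_{\ko}\), combined with Theorem~\ref{TpropGARCH} (i.e., condition (\ref{defcv})) on the under-shooting event. The only cosmetic difference is in the final assembly — you split the expectation over the two events, while the paper states the pointwise bound \(|L_{I_{\ko}}(\tilde{\thetav}_{I_{\ko}},\hat{\thetav})|^{r} \le |L_{I_{\ko}}(\tilde{\thetav}_{I_{\ko}},\hat{\thetav}_{I_{\ko}})|^{r}\bb{1}(\hat{k}<\ko) + \zz_{\ko}^{r}\bb{1}(\hat{k}>\ko)\) and then takes logarithms — and you deserve credit for explicitly flagging the requirement \(\tcp \in \Tcp(I_{\hat k})\), which the paper uses silently.
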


Due to this result, the value 
\( L_{I_{\ko}}\bigl( \tilde{\thetav}_{I_{\ko}}, \hat{\thetav} \bigr) \)
is stochastically bounded. 
This can be interpreted as oracle property of \( \hat{\thetav} \) 
because it means that the adaptive estimate 
\( \hat{\thetav} \) belongs with a high probability to the confidence set 
of the oracle estimate \( \tilde{\thetav}_{I_{\ko}} \).


\section{Simulation study}
\label{Sappl}

In the last two sections, we present simulation study (Section \ref{Sappl}) and
real data applications (Section \ref{Srealappls}) documenting the performance
of the proposed adaptive estimation procedure. To verify the practical
applicability of the method in a complex setting, we concentrate on the
volatility estimation using parametric and adaptive pointwise estimation of
constant volatility, ARCH(1), and GARCH(1,1) models (for the sake of brevity,
referred to as the local constant, local ARCH, and local GARCH). The reason is
that the estimation of GARCH models requires generally hundreds of observations
for reasonable quality of estimation, which puts the adaptive procedure working
with samples as small as 10 or 20 observations to a hard test. Additionally,
the critical values obtained as described in Section \ref{Slambda} depend on
the underlying parameter values in the case of (G)ARCH.

Here we first study the finite-sample critical values for the test of
homogeneity by means of Monte Carlo simulations and discuss practical
implementation details (Section~\ref{Sfincv}). Later, we demonstrate the
performance of the proposed adaptive pointwise estimation procedure in
simulated samples (Sections~\ref{Sbreaksim}). Note that, throughout this
section, we identify the GARCH(1,1) models by triplets \( (\omega, \alpha,
\beta) \): for example, \( (1, 0.1, 0.3) \)-model. Constant volatility and
ARCH(1), are then indicated by \( \alpha = \beta = 0 \)  and \( \beta =0 \),
respectively. The GARCH estimation is done using GARCH 3.0 package (Laurent and 
Peters, 2006) and Ox 3.30 (Doornik, 2002). 
Finally, since the focus is on modelling the volatility $\sigma_{t}^{2}$
in (\ref{timemodel}), the performance measurement and comparison of all models 
at time \(t\) is done by the absolute prediction error (PE) of the volatility 
process over a prediction horizon \( \cc{H} \): 
$\mathrm{APE}(t) = \sum_{h \in \cc{H}} |\sigma_{t+h}^2 - \hat\sigma_{t+h|t}^2|/|\cc{H}|$,
where $\hat\sigma_{t+h|t}^2$ represents the volatility prediction by a particular
model.
%


\subsection{Finite-sample critical values for test of homogeneity}
\label{Sfincv}

A practical application of the pointwise adaptive procedure requires critical
values for the test of local homogeneity of a time series. Since they
are obtained under the null hypothesis that a chosen parametric model
(locally) describes the data, see Section \ref{Sprocedure}, we need to obtain
the critical values for the constant volatility, ARCH(1), and GARCH(1,1) models.
Furthermore for given \( r \) and \( \rho \), the average risk (\ref{defcv}) 
between the adaptive and oracle estimates can be bounded for critical values 
that linearly depend on the logarithm of interval length \( |I_{k}| \):
\( \zz(|I_{k}|)  = \zz_{k} = C + D \log(|I_{k}|)\)
(see Theorem~\ref{TCVGARCH}). As described in Section \ref{Slambda}, we choose 
here the smallest \( C \) satisfying (\ref{eqfixcv}) and the corresponding minimum
admissible value \( D = D(C) < 0 \) that guarantees the conditions (\ref{defcv}).

We simulated the critical values for ARCH(1) and GARCH(1,1) models with
different values of underlying parameters; see Table~\ref{tcvall} for the
critical values corresponding to \( r=1 \)  and \( \alpn=1 \). Their simulation
was performed sequentially on intervals with lengths ranging from 
\(|I_{0}|=m_{0}=10 \) to \( |I_\K|=570 \) observations using a geometric grid
with multiplier \( a=1.25 \), see Section~\ref{Ssprocedure}. (The results are
however not sensitive to the choice of \( a \).)


\linespread{1.0}
\begin{table}[t]
\begin{center}
\caption{Critical values \( \zz_{k} = \zz(|I_{k}|) \)
of the supremum LR test 
for various constant (\( \alpha=\beta=0 \)), ARCH(1) (\( \beta=0 \)), and GARCH(1,1) models; 
\( \omega = 1, r = 1, \alpn = 1 \), and \( \alpha \) and \( \beta \)
are stated in the table.} \label{tcvcomplete} \label{tcvall} 
\begin{tabular}{crcccccccccc}
\hline
\( \zz(|I_{k}|) \)       &         &  \multicolumn{10}{c}{\( \beta \)} \\
\cline{3-12}
\( \alpha \) & \( |I_{k}| \) & ~0.0 & ~0.1 & ~0.2 & ~0.3 & ~0.4 & ~0.5 & ~0.6 & ~0.7 & ~0.8 & ~0.9 \\
\hline    
0.0      &  ~10    & 15.5 & 15.5 & 16.4 & 16.8 & 17.9 & 17.3 & 17.0 & 17.0 & 16.9 & 16.0 \\ 
         &  570    & ~5.5 & ~7.2 & ~7.0 & ~7.0 & ~7.5 & ~7.5 & ~7.4 & ~7.3 & ~7.0 & ~6.7 \\ 
0.1      &  ~10    & 16.3 & 14.5 & 15.1 & 15.9 & 16.4 & 15.9 & 16.1 & 16.0 & 16.0 &      \\
         &  570    & ~8.6 & ~9.0 & ~9.1 & ~9.6 & ~9.8 & 10.7 & 11.5 & 12.5 & 14.0 &      \\
0.2      &  ~10    & 16.7 & 15.2 & 15.7 & 16.2 & 16.9 & 18.9 & 20.1 & 25.1 &      &      \\
         &  570    & ~9.4 & 10.6 & 11.2 & 11.4 & 11.4 & 12.5 & 13.3 & 14.2 &      &      \\
0.3      &  ~10    & 18.5 & 16.4 & 16.7 & 16.9 & 18.1 & 21.8 & 26.4 &      &      &      \\
         &  570    & ~9.7 & 10.8 & 12.0 & 12.4 & 12.9 & 13.5 & 14.5 &      &      &      \\
0.4      &  ~10    & 22.1 & 16.5 & 18.3 & 19.3 & 22.8 & 30.9 &      &      &      &      \\
         &  570    & ~9.9 & 12.0 & 13.0 & 13.4 & 13.9 & 14.7 &      &      &      &      \\
0.5      &  ~10    & 26.2 & 19.1 & 19.5 & 25.4 & 38.1 &      &      &      &      &      \\
         &  570    & 10.7 & 12.6 & 13.8 & 14.0 & 14.6 &      &      &      &      &      \\
0.6      &  ~10    & 33.0 & 22.8 & 25.9 & 32.4 &      &      &      &      &      &      \\
         &  570    & 12.7 & 12.7 & 13.9 & 15.3 &      &      &      &      &      &      \\
0.7      &  ~10    & 41.1 & 24.8 & 29.1 &      &      &      &      &      &      &      \\
         &  570    & 16.8 & 14.7 & 16.1 &      &      &      &      &      &      &      \\
0.8      &  ~10    & 66.2 & 26.4 &      &      &      &      &      &      &      &      \\
         &  570    & 31.5 & 15.8 &      &      &      &      &      &      &      &      \\
0.9      &  ~10    & 88.6 &      &      &      &      &      &      &      &      &      \\
         &  570    & 60.9 &      &      &      &      &      &      &      &      &      \\
\hline
\end{tabular}
\end{center}
\end{table}
\linespread{1.5}

Unfortunately, the critical values depend on the parameters of the underlying (G)ARCH 
model (in contrast to the constant-volatility model). 
They generally seem to increase with the values of the ARCH
and GARCH parameters keeping the other one fixed, see Table~\ref{tcvcomplete}.  
To deal with this dependence on the underlying model parameters,
we propose to choose the largest (most conservative) critical values
corresponding to any estimated parameter in the analyzed data. For example,
if the largest estimated parameters of GARCH(1,1) are \( \hat\alpha=0.3 \)  and
\( \hat\beta=0.8 \), one should use \( \zz(10)=26.4 \)  and \( \zz(570)=14.5 \), 
which are the largest critical values for models with \( \alpha=0.3, \beta \le 0.8 \)
and with \( \alpha \le 0.3, \beta=0.8 \). (The proposed procedure is
however not overly sensitive to this choice as we shall see later.) 

\linespread{1.0}
\begin{table}[t]
\begin{center}
\caption{Critical values \(   \zz(|I_{k}|)   \)  of the supremum LR test 
for some constant volatility, ARCH(1), and GARCH(1,1) models and various values \(   r   \)  and \(   \alpn   \).}\label{tcvalt}
\begin{tabular}{ccccccccccccc}
\hline
\multicolumn{2}{c}{Model \( (\omega,\alpha,\beta) \)} & & 
\multicolumn{2}{c}{\( (0.1,0.0,0.0) \)} & & 
\multicolumn{2}{c}{\( (0.1,0.2,0.0) \)} & & 
\multicolumn{2}{c}{\( (0.1,0.1,0.8) \)} \\
\cline{4-5}
\cline{7-8}
\cline{10-11}
\( r \)  & \( \alpn \)            & & \(\zz(10)\) & \(\zz(570)\)& & \(\zz(10)\) & \(\zz(570)\) & & \(\zz(10)\) & \( \zz(570) \)    \\
\hline
 1.0 & 0.5 & &  16.3     &     ~7.3   & &  17.4     &     11.2    & &  18.7     &     17.1      \\
 1.0 & 1.0 & &  15.4     &     ~5.5   & &  16.7     &     ~9.4    & &  16.0     &     14.0      \\
 1.0 & 1.5 & &  14.9     &     ~4.5   & &  15.9     &     ~8.3    & &  15.2     &     13.4      \\[1ex]
 0.5 & 0.5 & &  10.7     &     ~7.1   & &  11.7     &     10.1    & &  11.7     &     10.1      \\
 0.5 & 1.0 & &  ~8.9     &     ~5.5   & &  10.3     &     ~8.5    & &  10.3     &     ~8.5      \\
 0.5 & 1.5 & &  ~7.7     &     ~4.6   & &  ~9.3     &     ~7.5    & &  ~9.3     &     ~7.5      \\
\hline
\end{tabular}
\end{center}
\end{table}
\linespread{1.5}

Finally, let us have a look at the influence of the tuning constants \( r \)
and \( \alpn \) in (\ref{defcv}) on the critical values for several selected
models (Table~\ref{tcvalt}). The influence is significant, but can be
classified in the following way. Whereas increasing \( \alpn \)  generally
leads to an overall decrease of critical values (cf.\ Theorem~\ref{TCVGARCH}),
but primarily for the longer intervals, increasing \( r \)  leads to an increase of
critical values mainly for the shorter intervals, cf. (\ref{defcv}). In
simulations and real applications, we verified that a fixed choice such
as \( r=1 \)  and \( \alpn=1 \) performs well. To optimize the performance of
the adaptive methods, one can however determine constants \( r \)  and \( \alpn
\) in a data-dependent way as described in Section~\ref{Slambda}. 
We use here this strategy for a
small grid of \( r\in\{0.5,1.0\} \)  and \( \alpn \in \{0.5,1.0,1.5\} \)  and
find globally optimal \( r \)  and \( \alpn \). We will document though
that the differences in the average absolute PE (\ref{eqforerr}) for various
values of \( r \) and \( \rho \) are relatively small.

\begin{figure}
\begin{center}
\includegraphics[scale=0.72]{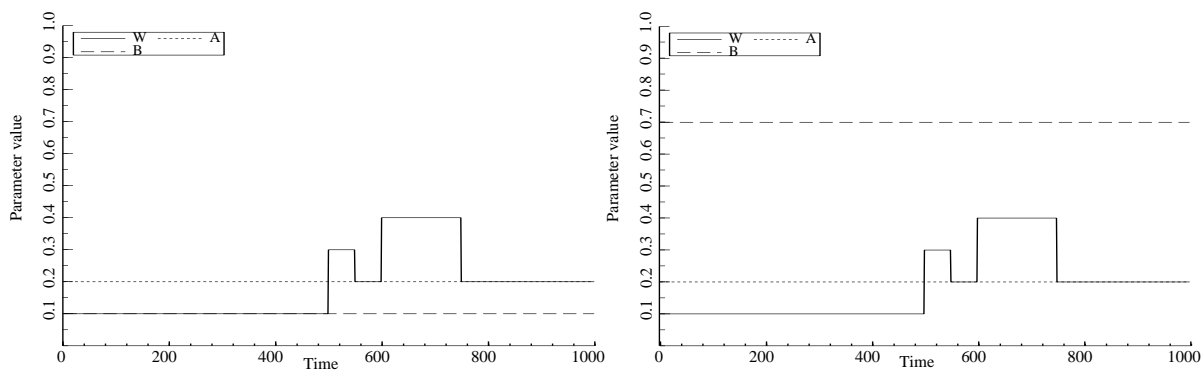}
\caption{GARCH(1,1) parameters of low (upper panel) and high (lower panel) GARCH-effect simulations
for \( t=1,\ldots,1000 \).}\label{fsimpar}
\end{center}
\end{figure}


\subsection{Simulation study}
\label{Sbreaksim}

We aim to examine how well the proposed estimation method is able to adapt to 
long stable (time-homogeneous) periods and to less stable periods with more frequent
volatility changes, and (ii) to see which adaptively estimated model -- local
volatility, local ARCH, or local GARCH -- performs best in different regimes.
To this end, we simulated 100 series from two change-point GARCH models with a
low GARCH effect \( (\omega, 0.2, 0.1) \)  and a high GARCH-effect \( (\omega,
0.2, 0.7) \). Changes in constant \( \omega \)  are spread over a time span of
1000 days, see Figure~\ref{fsimpar}. There is a long stable period at the
beginning (500 days \( \approx \) 2 years) and end (250 days \( \approx \) 1 year) of
time series with several volatility changes between them.

\subsubsection{Low GARCH-effect}

Let us now discuss simulation results from the low GARCH-effect model. First,
we mention the effect of structural changes in time series on the parameter
estimation. Later, we compare the performance of all methods in terms of
absolute PE.

\begin{figure}
\begin{center}
\includegraphics[scale=0.65]{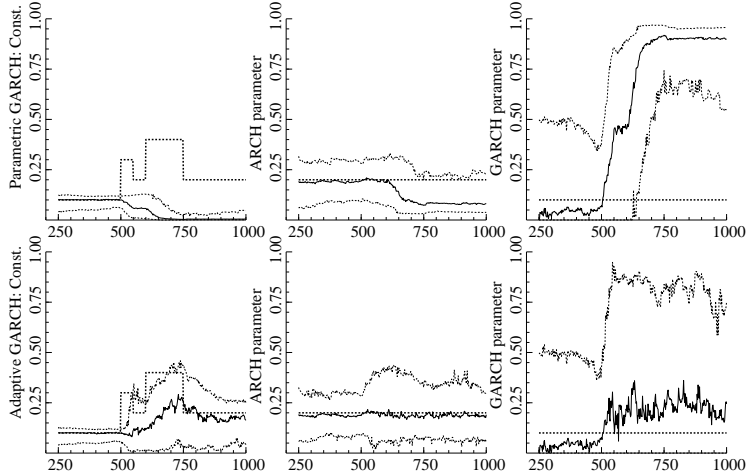}
\caption{The mean (solid line) and 10\% and 90\% quantiles (dotted lines) of
the parameters estimated by the parametric (upper row) and locally adaptive 
(lower row) GARCH methods, \( t=250,\ldots,1000. \) Thick dotted line represents the true parameter value. 
}\label{fparest}
\end{center}
\end{figure}

Estimating a parametric model from data containing a change point will
necessarily lead to various biases in estimation. For example, Hillebrand (2005) 
demonstrates that a change in volatility
level \( \omega \)  within a sample drives the GARCH parameter \( \beta \) very
close to 1. This is confirmed when we analyze the parameter estimates for
parametric and adaptive GARCH at each time point \( t \in [250,1000] \) as
depicted on Figure~\ref{fparest}. The parametric estimates are consistent
before breaks starting at \( t = 500 \), but the GARCH parameter \( \beta \)
becomes inconsistent and converges to 1 once data contain breaks, \( t>500 \).
The locally adaptive estimates are similar to parametric ones before the breaks
and become rather imprecise after the first change point, but they are not too
far from the true value on average and stay consistent (in the sense that the 
confidence interval covers the true values). The low precision of estimation
can be attributed to rather short intervals used for estimation (cf.\  
Figure \ref{fparest} for \( t<500 \)).


Next, we would like to compare the performance of parametric and adaptive estimation
methods by means of absolute PE:
first for the prediction horizon of
one day, \( \cc{H} = \{1\} \), and later for prediction two weeks ahead,
\( \cc{H}=\{1,\ldots,10\} \). To make the results easier to decipher, we present
in what follows PEs averaged over the past month (21 days).
The absolute-PE criterion was also used to determine the optimal values of
parameters \( r \) and \( \alpn \) (jointly across all simulations and for all
\( t=250,\ldots,1000 \)). The results differ for different models: \( r=0.5,
\alpn=0.5 \) for local constant, \( r=0.5,\alpn=1.0 \) for local ARCH, and \(
r=0.5, \alpn=1.5 \) for local GARCH. 

\begin{figure}
\begin{center}
\hspace*{-4mm}\includegraphics[scale=0.55]{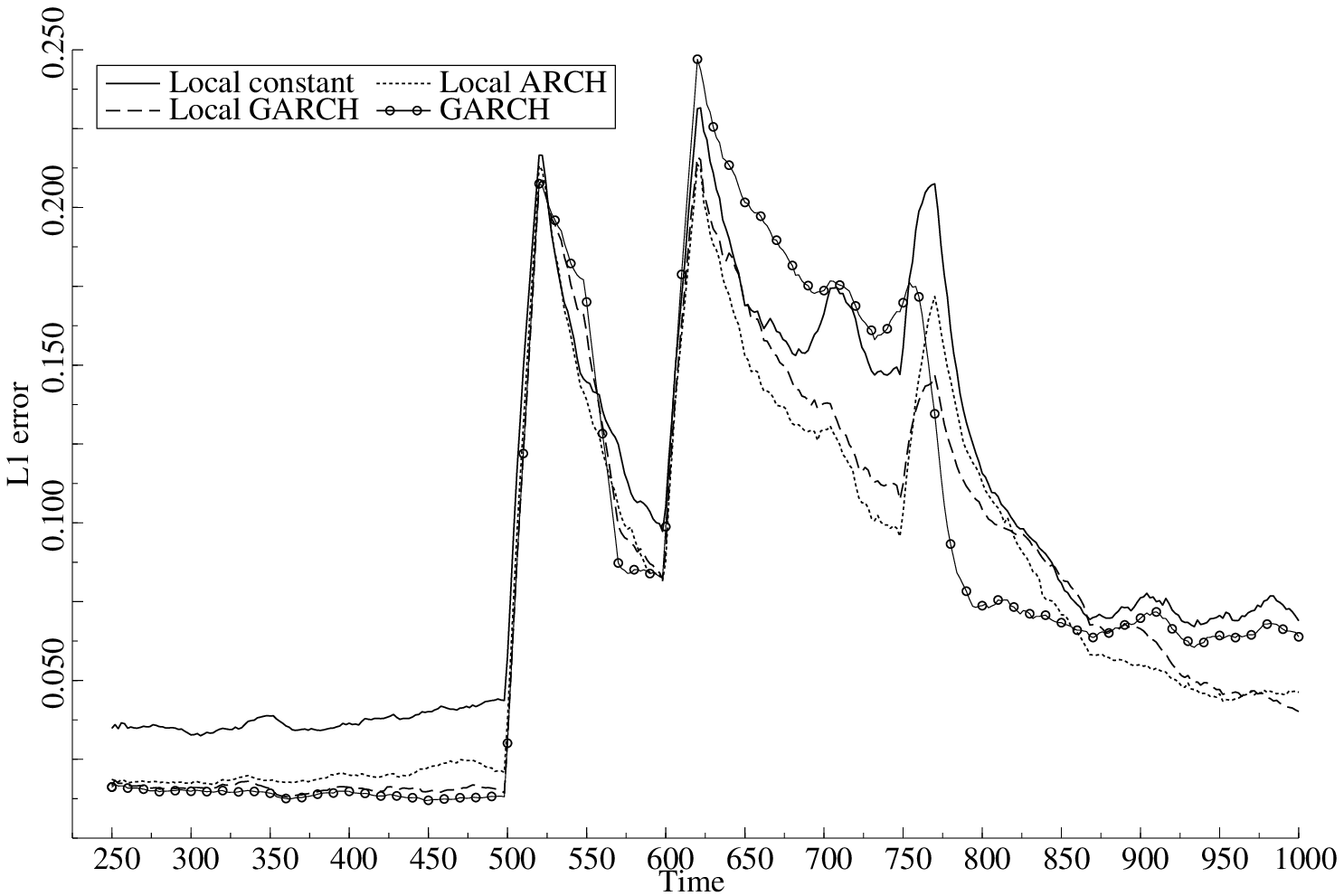}~\includegraphics[scale=0.55]{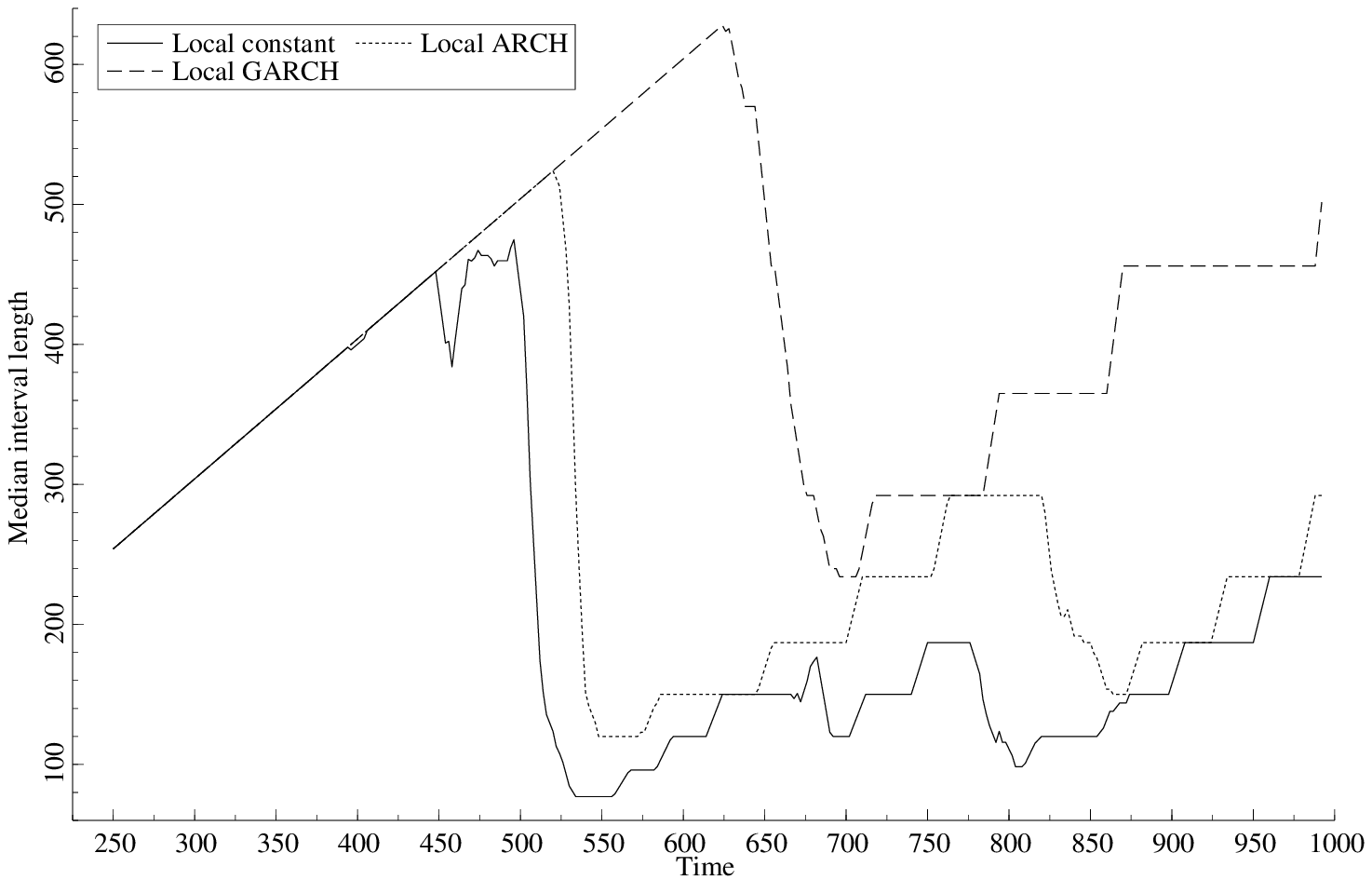}
\caption{\emph{Left panel:} Low GARCH-effect simulations: absolute prediction errors one period ahead averaged
over last month for the parametric GARCH and adaptive local constant,
local ARCH, and local GARCH models; \(   t \in [250,1000]   \).
\emph{Right panel:} The median lengths of adaptively selected intervals for all three pointwise adaptive methods.}\label{fcv1l1err}
\end{center}
\end{figure}

Let us now compare the adaptively estimated local constant, local ARCH, and
local GARCH models with the parametric GARCH, which is the best performing
parametric model in this setup. Forecasting one period ahead, the average PEs
for all methods and the median lengths of the selected time-homogeneous
intervals for adaptive methods are presented on Figure~\ref{fcv1l1err}. First
of all, one can notice that all methods are sensitive to jumps in volatility,
especially to the first one at \( t=500 \): the parametric ones because they ignore
a structural break, the adaptive ones because they use a small amount of data
after a structural change. In general, the local GARCH performs rather
similarly to the parametric GARCH for \( t<650 \) because it uses all historical
data. After initial volatility jumps, the local GARCH however outperforms the
parametric one, \( 650<t<775 \). Following the last jump at \( t=750 \), where
the volatility level returns closer to the initial one, the parametric GARCH is
best of all methods for some time, \( 775<t<850 \), until the adaptive
estimation procedure detects the (last) break, and after it, ``collects'' enough
observations for estimation. Then the local GARCH and local ARCH become
preferable to the parametric model again, \( 850<t \). Interestingly, the local
ARCH approximation performs almost as well as both GARCH methods and even
outperforms them shortly after structural breaks (except for break at \( t=750 \)),
\( 600<t<775 \) and \( 850<t<1000 \). Finally, the local constant volatility is lacking
behind the other two adaptive methods whenever there is a longer time period
without a structural break, but keeps up with them in periods with frequent
volatility changes, \( 500<t<650 \). All these observations can be documented also
by the absolute PE averaged over the whole period \( 250\le t \le 1000 \) (we refer
to it as the global PE from now on): the smallest PE is achieved by local ARCH
(0.075), then by local GARCH (0.079), and the worst result is from local
constant (0.094).

\begin{figure}
\begin{center}
\hspace*{-4mm}\includegraphics[scale=0.55]{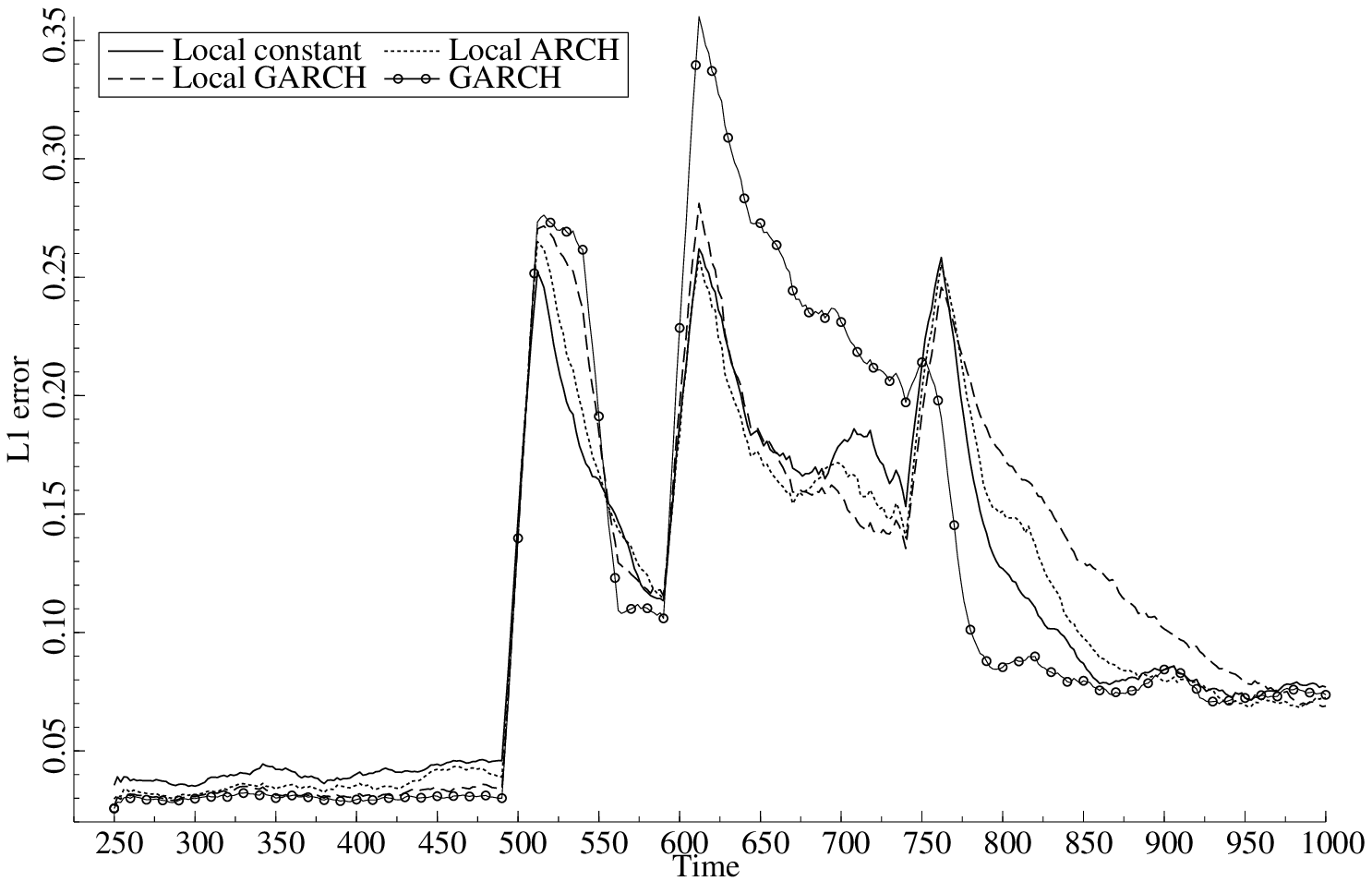}~\includegraphics[scale=0.55]{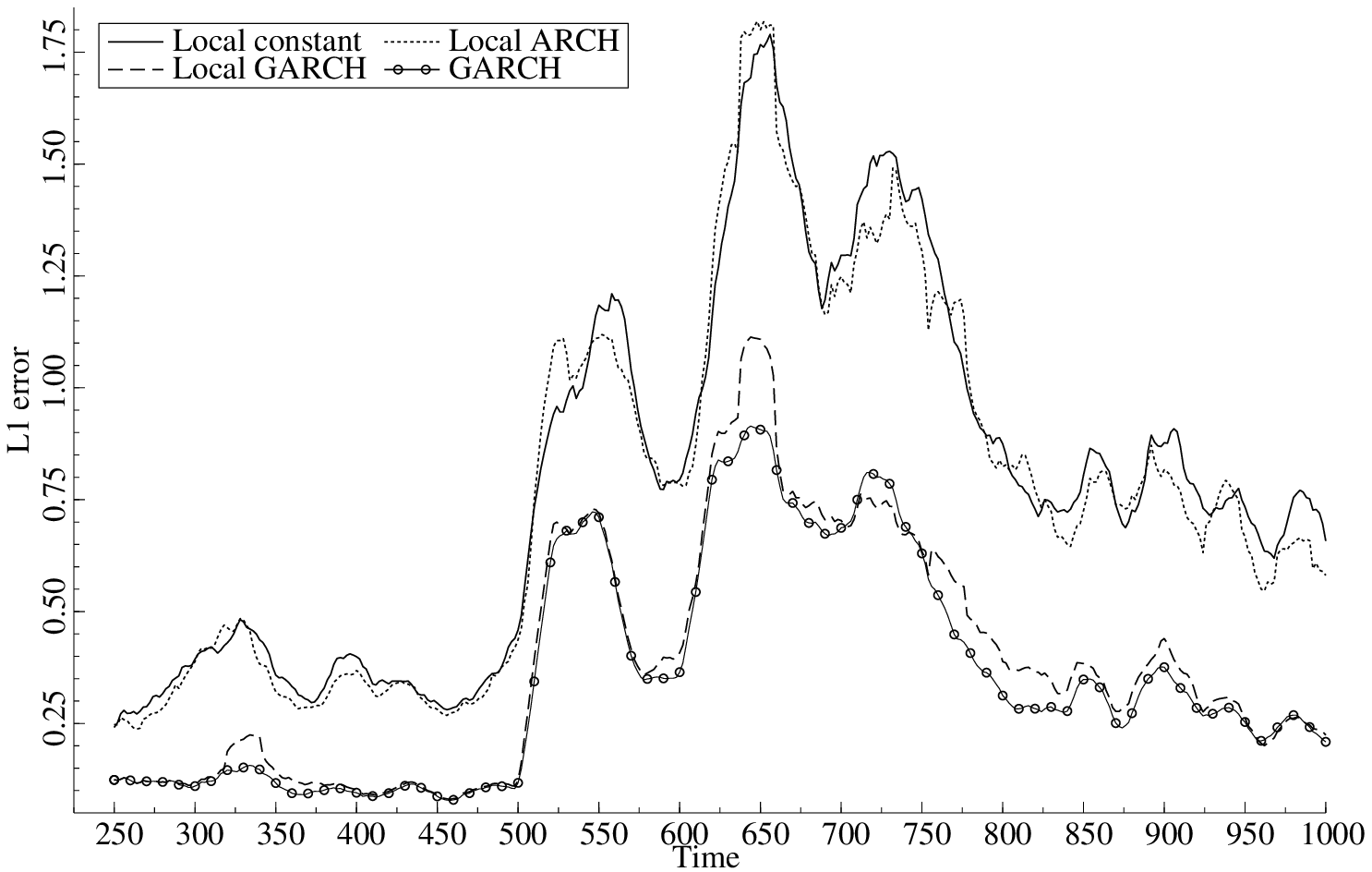}
\caption{\emph{Left panel:} Low GARCH-effect simulations -- absolute prediction errors ten periods ahead averaged
over last month. 
\emph{Right panel:} High GARCH-effect simulations -- absolute prediction errors one period ahead averaged
over last month. In both cases, the parametric GARCH, adaptive local constant,
local ARCH, and local GARCH models are presented for \( t \in [250,1000] \).}
\label{fcv10l1err}
\label{fcv1h1err}
\end{center}
\end{figure}

Additionally, all models are compared using the forecasting horizon of ten
days. Most of the results are the same (e.g., parameter estimates) or similar
(e.g., absolute PE) to forecasting one period ahead due to the
fact that all models rely on at most one past observation. The absolute
PEs averaged over one month are summarized on
Figure~\ref{fcv10l1err}, which reveals that the difference between local constant
volatility, local ARCH, and local GARCH models are smaller in this case. As a
result, it is interesting to note that: (i) the local constant model becomes a
viable alternative to the other methods (it has in fact the smallest global
PE 0.107 from all adaptive methods); and (ii) the local ARCH
model still outperforms the local GARCH (global PEs are 0.108 and 0.116,
respectively) even though the underlying model is GARCH (with a small
value of \( \beta=0.1 \) however).


\subsubsection{High GARCH-effect}

Let us now discuss the high GARCH-effect model. One would expect much more
prevalent behavior of both GARCH models, since the underlying GARCH parameter
is higher and the changes in the volatility level \( \omega \)  are likely to
be small compared to overall volatility fluctuations.
Note that the optimal values of tuning constant \( r \)  and \( \alpn \) differ from 
the low GARCH-effect simulations: \( r=0.5, \alpn=1.5 \) for local constant; 
\( r=0.5,\alpn=1.5 \) for local ARCH; and \( r=1.0, \alpn=0.5 \) for local GARCH.

Comparing the absolute PEs for one-period-ahead forecast at each time
point (Figure~\ref{fcv1h1err}) indicates that the adaptive and parametric
GARCH estimations perform approximately equally well. On the other hand, both
the parametric and adaptively estimated ARCH and constant volatility models
are lacking significantly. Unreported results confirm, similarly to the low
GARCH-effect simulations, that the differences among method are much smaller once
a longer prediction horizon of ten days is used.
%
%



\begin{figure}
\begin{center}
\includegraphics[scale=1.0]{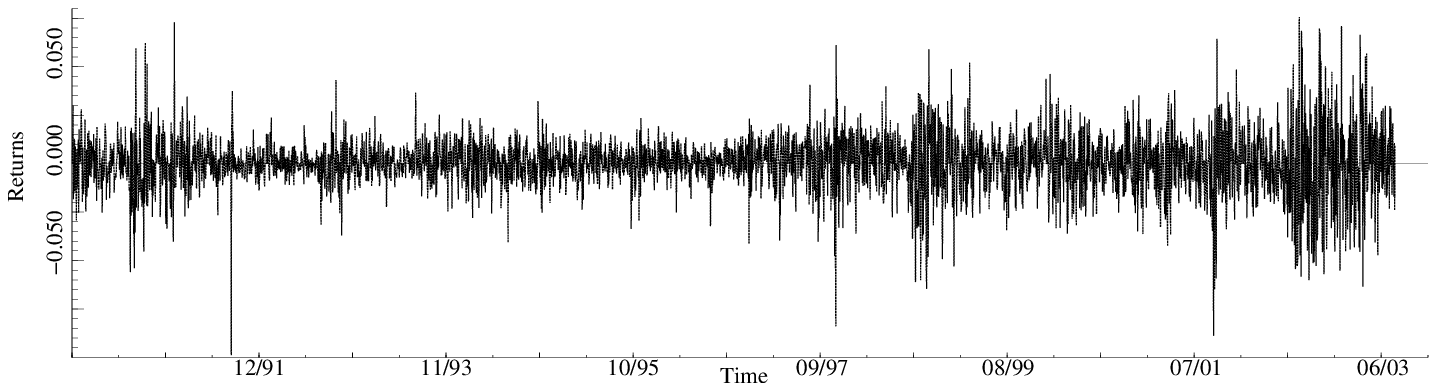}\\
\includegraphics[scale=0.77]{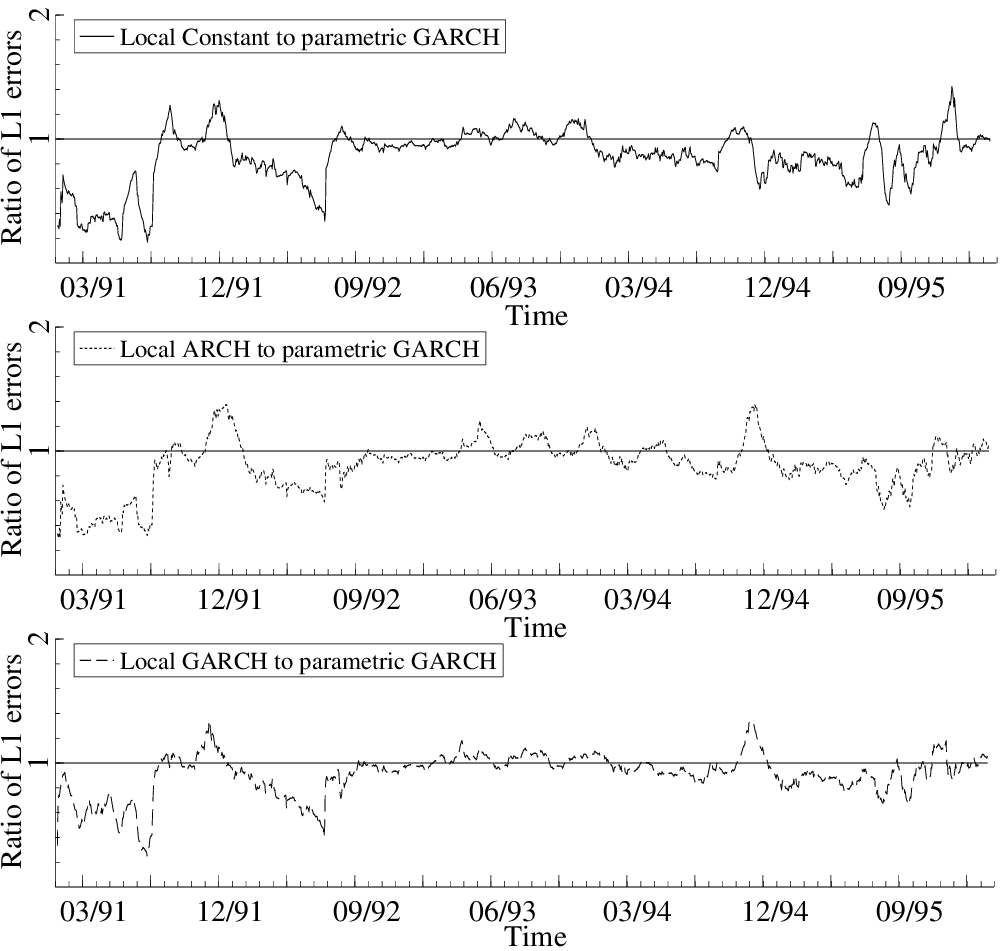}~\includegraphics[scale=0.77]{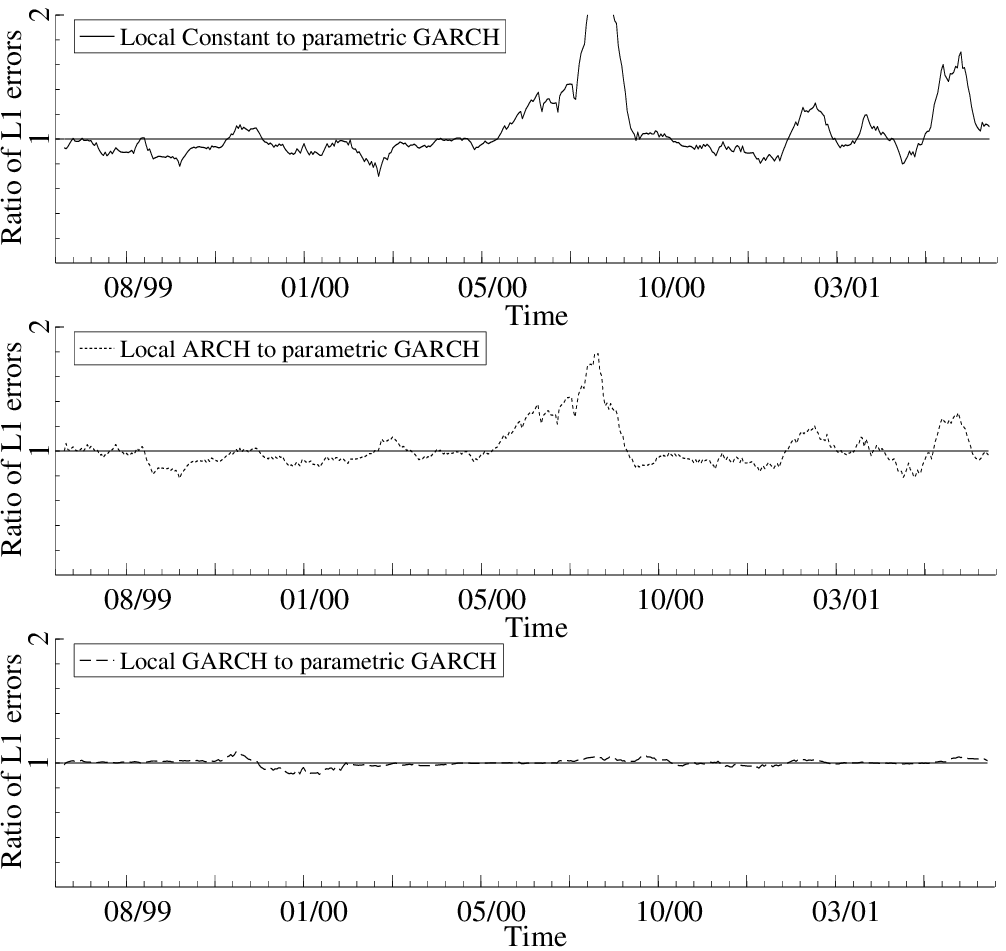}
\caption{\emph{Top panel:} The log-returns of DAX series from January 1990 till December 2002. 
\emph{Bottom panels:} The ratios of the absolute prediction errors of the three 
pointwise adaptive methods to the parametric GARCH for predictions one period ahead 
averaged over one month. The DAX index is considered from January 1992 to March 1997
(left panel) and from July 1999 to June 2001 (right panel).}
\label{dax30-2}
\label{daxreal1992}
\label{daxreal1999}
\end{center}
\end{figure}

\section{Applications}\label{Srealappls}

The proposed adaptive pointwise estimation method will be now applied to real time
series consisting of the log-returns of the DAX and S\&P 500 stock indices
(Sections~\ref{Sdax} and \ref{Ssp500}). 
We will again summarize the results concerning both parametric and adaptive
methods by the absolute PEs one-day ahead averaged over one month. As a
benchmark, we employ the parametric GARCH estimated using last two years of
data (500 observations). Since we however do not have the underlying volatility
process now, it is approximated by squared returns. Despite being noisy, this
approximation is unbiased and provides usually the correct ranking of methods
(Andersen and Bollerslev, 1998). 

\subsection{DAX analysis}\label{Sdax}

Let us now analyze the log-returns of the German stock index DAX from January
1990 till December 2002 depicted at the top of Figure~\ref{dax30-2}. Several
periods interesting for comparing the performance of parametric and adaptive
pointwise estimates are selected since results for the whole period might be hard to
decipher at once.

First, consider the estimation results for years 1991 to 1996. Contrary to
later periods, there are structural breaks practically immediately detected by
all adaptive methods (July 1991 and June 1992; cf. Stapf and Werner, 2003). For
the local GARCH, this differs from less pronounced structural changes discussed
later, which are typically detected only with several months delays. One
additional break detected by all methods occurs in October 1994. Note that
parameters \( r \)  and \( \alpn \)  were \( r=0.5,\alpn=1.5 \) for local
constant, \( r=1.0,\alpn=1.0 \)  for local ARCH, and \( r=0.5,\alpn=1.5 \)  for
local GARCH.

The results for this period are summarized in Figure~\ref{daxreal1992}, which
depicts the PEs of each adaptive method relative to the PEs of parametric
GARCH. First, one can notice that the local constant and local ARCH
approximations are preferable till July 1991, where we have less than 500
observations. After the detection of the structural change in June 1991, all
adaptive methods are shortly worse than the parametric GARCH due to limited
amount of data used, but then outperform the parametric GARCH till the next
structural break in the second half of 1992. A similar behavior can be observed
after the break detected in October 1994, where the local constant and local
ARCH models actually outperform both the parametric and adaptive GARCH. In the
other parts of the data, the performance of all methods is approximately the
same, and even though the adaptive GARCH is overall better than the parametric
one, the most interesting fact is that the adaptively estimated local constant
and local ARCH models perform equally well. In terms of the global PE, the
local constant is best (0.829), followed by the local ARCH (0.844) and local
GARCH (0.869). This closely corresponds to our findings in simulation study
with low GARCH effect in Section~\ref{Sbreaksim}. Note that for other choices
of \( r \) and \( \alpn \), the global PEs are at most 0.835 and
0.851 for the local constant and local ARCH, respectively. This indicates low
sensitivity to the choice of these parameters.


Next, we discuss the estimation results for years 1999 to 2001 (\( r=1.0 \)
for all methods now). After the financial markets were hit by the Asian crisis
in 1997 and Russian crisis in 1998, market headed to a more stable state in
year 1999. The adaptive methods detected the structural breaks in the fall of
1997 and 1998. The local GARCH detected them however with more than one-year
delay -- only during 1999. The results in Figure~\ref{daxreal1999} confirm that
the benefits of the adaptive GARCH are practically negligible compared to the
parametric GARCH in such a case. On the other hand, the local constant and ARCH
methods perform slightly better than both GARCH methods during the first
presented year (July 1999 to June 2000). From July 2000, the situation becomes
just the opposite and the performance of the GARCH models is better (parametric
and adaptive GARCH estimates are practically the same in this period since the
last detected structural change occurred approximately two years ago). Together
with previous results, this opens the question of model selection among
adaptive procedures as different parametric approximations might be preferred
in different time periods. Judging by the global PE, the local ARCH provides
slightly better predictions on average than the local constant and local GARCH --
despite the ``peak'' of the PE ratio in the second half of year 2000 (see
Figure~\ref{daxreal1999}). This however depends on the specific choice of
loss \( \Lambda \) in (\ref{eqforerr}).

Finally, let us mention that the relatively similar behavior of the local constant and local ARCH
methods is probably due to the use of ARCH(1) model, which is not sufficient to capture
more complex time developments. Hence, ARCH(p) might be a more appropriate interim step
between the local constant and GARCH models.


\begin{figure}
\begin{center}
\includegraphics[scale=0.77]{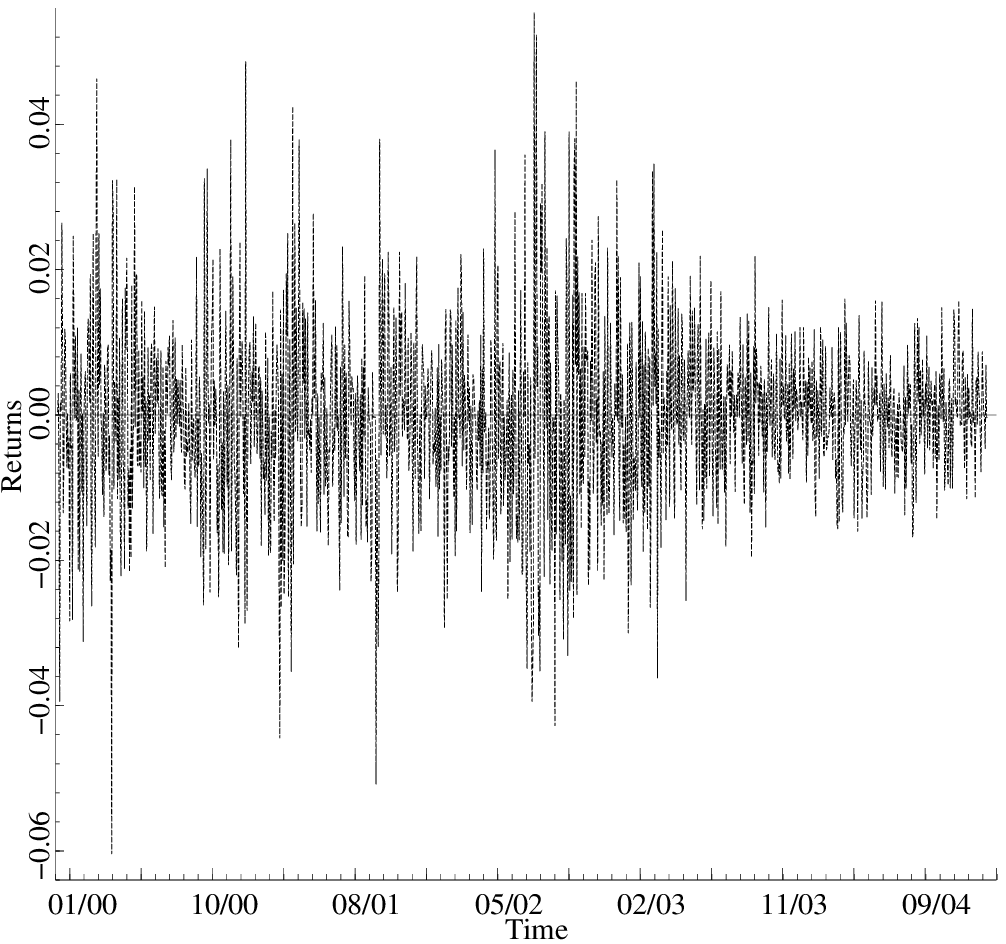}~\includegraphics[scale=0.77]{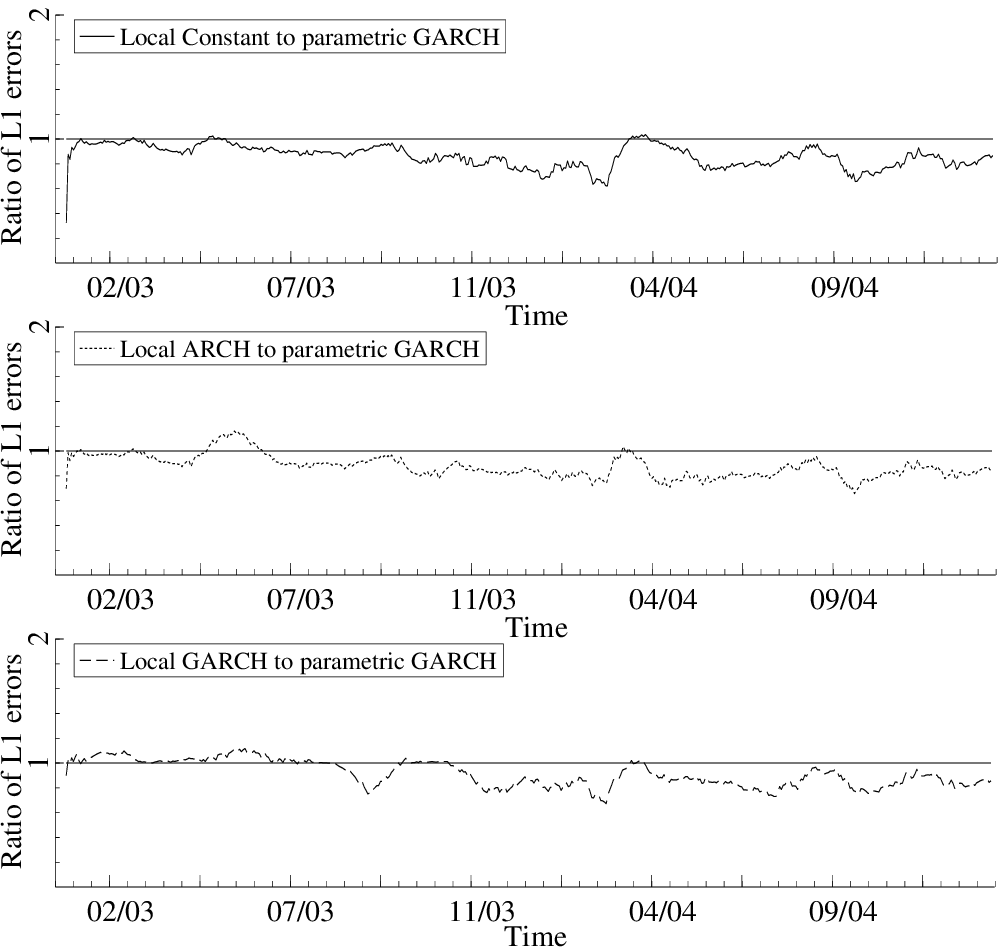}
\caption{\emph{Left panel:} The log-returns of S\&P 500 from January 2000 till December 2004.
\emph{Right panel:} The ratio of the absolute prediction errors of the three pointwise adaptive methods
to the parametric GARCH for predictions one period ahead averaged over one month horizon.
The S\&P 500 index is considered from January, 2003 to December, 2004.}
\label{sp500real20032004}
\label{sp500}
\end{center}
\end{figure}

\subsection{S\&P 500}\label{Ssp500}

Now we turn our attention to more recent data regarding the S\&P
500 stock index considered from January 1990 to December 2004,
%
see Figure~\ref{sp500}. This period is marked by many substantial events affecting
the financial markets, ranging from September 11, 2001, terrorist attacks and
the war in Iraq (2003) to the crash of the technology stock-market bubble
(2000--2002). For the sake of simplicity, a particular time period is again
selected:
year 2003 representing a more volatile period (war in Iraq) and year 2004 being
a less volatile period. All adaptive methods detected rather quickly a
structural break at the beginning of 2003, and additionally, they detected a
structural break in the second half of 2003, although the adaptive GARCH did so
with a delay of more than 8 months. The ratios of monthly PE of all adaptive
methods to those of the parametric GARCH are summarized on
Figure~\ref{sp500real20032004} (\( r=0.5 \)  and \( \alpn=1.5 \) for all
methods).

In the beginning of year 2003, corresponding with 2002 to a more volatile
period (see Figure~\ref{sp500}), all adaptive methods perform as well as the
parametric GARCH. In the middle of year 2003, the local constant and local ARCH
models are able to detect another structural change (possibly less pronounced
than the one at the beginning of 2003 because of its late detection by the
adaptive GARCH). Around this period, the local ARCH shortly performs worse than
the parametric GARCH. From the end of 2003 and in year 2004, all adaptive
methods starts to outperform the parametric GARCH, where the reduction of the
PEs due to the adaptive estimation amounts to 20\% on average. All adaptive 
pointwise estimates exhibit a short period of instability in the first months of
2004, where their performance temporarily worsens to the level of parametric
GARCH. This corresponds to ``uncertainty'' of the adaptive methods about the
length of the interval of homogeneity. After this short period, the performance
of all adaptive methods is comparable, although the local constant performs
overall best of all methods (closely followed by local ARCH) judged by the
global PE.


Similarly to the low GARCH-effect simulations and to the analysis of DAX in
Section~\ref{Sdax}, it seems that the benefit of pointwise adaptive estimation
is most pronounced during periods of stability that follow an unstable period
(i.e., year 2004) rather than during a presumably rapidly changing
environment. The reason is that, despite possible inconsistency of parametric
methods under change points, the adaptive methods tend to have rather large
variance when the intervals of time homogeneity become very short.

\section{Conclusion}

We extend the idea of adaptive pointwise estimation to parametric CH models. In
the specific case of ARCH and GARCH, which represent particularly difficult
cases due to high data demands and dependence of critical values on underlying
parameters, we demonstrate the use and feasibility of the proposed procedure:
on the one hand, the adaptive procedure, which itself depends on a number of
auxiliary parameters, is shown to be rather insensitive to their choice, and on
the other hand, it facilitates the global selection of these parameters by
means of fit or forecasting criteria. The real-data applications highlight the
flexibility of the proposed time-inhomogeneous models since even simple
varying-coefficients models such as constant volatility and ARCH(1) can
outperform standard parametric methods such as GARCH(1,1). Finally, the
relatively small differences among the adaptive estimates based on different
parametric approximations indicate that, in the context of adaptive pointwise
estimation, it is sufficient to concentrate on simpler and less data-intensive
models such as ARCH(\( p \)), \( 0\le p\le 3 \), to achieve good forecasts.



\appendix

\section{Proofs}
%

\begin{proof}[Proof of Corollary \ref{CGARCH11}] Given the choice of \( \zz_\alpha \), it directly
follows from (\ref{thm21eq2}).
\end{proof}

\begin{proof}[Proof of Theorem \ref{TCVGARCH}]
Consider the event \( \cc{B}_{k} = \{ \hat{I} = I_{k-1} \} \) for some 
\( k \le \K \). This particularly means that \( I_{k-1} \) is accepted while 
\( I_{k} = [\TT-m_{k}+1, \TT] \) is rejected; that is, there is 
\( I' = [t',\TT] \subseteq I_{k} \) and \( \tcp \in \Tcp(I_{k}) \) such that
\( T_{I_{k},\tcp} > \zz_{k} = \zz_{I_{k},\Tcp(I_{k})} \).
%
For every fixed \( \tcp \in \Tcp(I_{k}) \) and \( J = I_{k} \setminus [\tcp+1,\TT] \), \( J^{c} = [\tcp+1,\TT] 
\), it holds by definition of \( T_{I_{k},\tcp} \) that
\begin{eqnarray*}
    T_{I_{k},\tcp}
    \le
    L_{J}(\tilde{\thetav}_{J}) + L_{J^{c}}(\tilde{\thetav}_{J^{c}}) 
    - L_{I}(\thetavs)
    =
    L_{J}(\tilde{\thetav}_{J},\thetavs) 
    + L_{J^{c}}(\tilde{\thetav}_{J^{c}},\thetavs) .
\end{eqnarray*}    
This implies by Theorem~\ref{TGARCH11} that
\(
    \P_{\thetavs}\bigl( T_{I_{k},\tcp} > 2\zz \bigr)
    \le 
    \exp\bigl\{ \lexpB(\lambda,\thetavs) - \lambda \zz \bigr\} .
\)
Now, 
\begin{eqnarray*}
    \P_{\thetavs}\bigl( \cc{B}_{k} \bigr)
    \le 
    \sum_{t' = \TT-m_{k}+1}^{\TT-m_{0}}
    \sum_{\tcp = t'+1}^{\TT-m_{0}+1} 
    2\exp\bigl\{ \lexpB(\lambda,\thetavs) - \lambda \zz_{k}/2 \bigr\} 
    \le
    2 \frac{m_{k}^2}{2} \exp\bigl\{ \lexpB(\lambda,\thetavs) - \lambda \zz_{k}/2 \bigr\} .
\end{eqnarray*}    

Next, by the Cauchy-Schwartz inequality
\begin{eqnarray*}
    \E_{\thetavs} \bigl| L_{I_{\K}}(\tilde{\thetav}_{I_{\K}},\hat{\thetav}) \bigr|^{r}
    \!\!=\!
    \sum_{k=1}^{\K} \! \E_{\thetavs} \bigl[ \bigl| 
        L_{I_{\K}}(\tilde{\thetav}_{I_{\K}},\tilde{\thetav}_{k-1}) 
    \bigr|^{r} \bb{1}(\cc{B}_{k}) \bigr]
    \!\!\le\!
    \sum_{k=1}^{\K} \! \E_{\thetavs}^{1/2} \bigl| 
        L_{I_{\K}}(\tilde{\thetav}_{I_{\K}},\tilde{\thetav}_{k-1}) 
    \bigr|^{2r} \P_{\thetavs}^{1/2}(\cc{B}_{k})
\end{eqnarray*}    
Under the conditions of Theorem~\ref{TGARCH11}, it follows similarly to (\ref{THM21EQ3}) that
\begin{eqnarray*}
    \E_{\thetavs} \bigl| L_{I_{\K}}(\tilde{\thetav}_{I_{\K}},\tilde{\thetav}_{k-1}) \bigr|^{2r}
    \le 
    (m_{\K}/m_{k-1})^{2r} \Crlp_{2r}^{*}(\thetavs)
\end{eqnarray*}    
for some constant \( \Crlp_{2r}^{*}(\thetavs) \) and \( k=1,\ldots,\K \), and therefore,
\begin{eqnarray*}
    \E_{\thetavs} \bigl| L_{I_{\K}}(\tilde{\thetav}_{I_{\K}},\hat{\thetav}) \bigr|^{r}
    \le
    [\Crlp_{2r}^*(\thetavs)]^{1/2} \sum_{k=1}^{\K} m_{k} (m_{\K}/m_{k-1})^{r}
    \exp\bigl\{ \lexpB(\lambda,\thetavs)/2 - \lambda \zz_{k}/4 \bigr\}
\end{eqnarray*}    
and the result follows by simple algebra provided that 
\( a_{1} \lambda /4 \ge 1 \) and \( a_{2} \lambda/4 > 2 \).
\end{proof}

\noindent
\emph{Proof of Theorem \ref{Triskbound}.}
The proof is based on the following general result.
\begin{lemma}
\label{Linfbound}
Let \( \P \) and \( \P_{0} \) be two measures such that the Kullback-leibler
divergence \( \E \log (d\P / d\P_{0}) \), satisfies
\( 
    \E \log (d\P / d\P_{0}) \le \Delta < \infty .
\) 
Then for any random variable \( \zeta \) with \( \E_{0} \zeta < \infty \), it holds that
\( 
    \E \log \bigl( 1 + \zeta \bigr)
    \le
    \Delta + \E_{0} \zeta.
\)
\end{lemma}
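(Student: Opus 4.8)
The plan is to prove Lemma~\ref{Linfbound} by a change-of-measure argument, since the quantity to be bounded is an expectation under \( \P \) while the bound is expressed through an expectation under \( \P_{0} \) and the divergence \( \E \log(d\P/d\P_{0}) \). The key tool is the classical entropy (Donsker--Varadhan) inequality: for any random variable \( g \) with \( \E_{0} e^{g} < \infty \),
\begin{eqnarray*}
    \E g \;\le\; \E \log\frac{d\P}{d\P_{0}} \,+\, \log \E_{0} e^{g} .
\end{eqnarray*}
First I would establish this inequality, and then specialize it to the particular choice \( g = \log(1+\zeta) \), for which \( e^{g} = 1 + \zeta \).

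To prove the entropy inequality, write \( Z = d\P/d\P_{0} \), so that \( \E[\,\cdot\,] = \E_{0}[Z\,\cdot\,] \) and \( \P \) is a probability measure with density \( Z \) relative to \( \P_{0} \). Then
\begin{eqnarray*}
    \E g - \E \log Z
    \;=\;
    \E \bigl[ \log( e^{g}/Z ) \bigr]
    \;\le\;
    \log \E \bigl[ e^{g}/Z \bigr]
    \;=\;
    \log \E_{0} e^{g} ,
\end{eqnarray*}
where the middle step is Jensen's inequality for the concave function \( \log \) under the probability measure \( \P \), and the last equality uses \( \E[e^{g}/Z] = \E_{0}[Z \cdot e^{g}/Z] = \E_{0} e^{g} \). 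Rearranging gives the entropy inequality. Taking \( g = \log(1+\zeta) \) yields \( \E_{0} e^{g} = 1 + \E_{0}\zeta \), which is finite by hypothesis, so the inequality reads \( \E \log(1+\zeta) \le \E\log(d\P/d\P_{0}) + \log(1+\E_{0}\zeta) \). Bounding the divergence by \( \Delta \) and applying the elementary estimate \( \log(1+x) \le x \) with \( x = \E_{0}\zeta \ge 0 \) then delivers \( \E \log(1+\zeta) \le \Delta + \E_{0}\zeta \), as claimed.

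The argument involves no substantial analytic difficulty; the only content is recognizing that the entropy inequality transfers the \( \P \)-expectation to a \( \P_{0} \)-expectation at a cost of exactly the divergence \( \Delta \). The minor technical points I would check are: (i) that \( g = \log(1+\zeta) \) is legitimate, which holds whenever \( \zeta \ge 0 \) (as in all applications here, where \( \zeta \) is a ratio of non-negative losses), so that \( 1+\zeta \ge 1 > 0 \) and \( e^{g} = 1+\zeta \); and (ii) the integrability needed for Jensen and for the interchange of expectation, which follows at once from \( \E_{0} e^{g} = 1 + \E_{0}\zeta < \infty \). The closest thing to an obstacle is simply identifying the entropy inequality as the right mechanism; once it is in hand, the remaining steps are immediate.
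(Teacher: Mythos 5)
Your proof is correct, but it takes a different (dual) route from the paper's. The paper argues pointwise: it uses the Fenchel--Young type inequality \( xy \le x \log x - x + e^{y} \) with \( x = Z = d\P/d\P_{0} \) and \( y = \log(1+\zeta) \), integrates under \( \P_{0} \), and concludes directly from \( \E_{0} Z = 1 \) and \( \E_{0}(Z \log Z) = \E \log Z \le \Delta \), landing on \( \Delta + \E_{0}\zeta \) in one pass. You instead establish the integrated (Donsker--Varadhan) variational inequality \( \E g \le \E \log Z + \log \E_{0} e^{g} \) via Jensen's inequality under \( \P \), specialize to \( g = \log(1+\zeta) \), and then relax \( \log(1+\E_{0}\zeta) \le \E_{0}\zeta \). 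Both arguments rest on the same entropy/log-moment-generating-function duality, but the mechanics differ: your route passes through the strictly sharper intermediate bound \( \Delta + \log(1+\E_{0}\zeta) \) and yields a reusable general inequality, whereas the paper's pointwise argument avoids Jensen and any division by \( Z \), so it never has to worry about the set \( \{Z = 0\} \). On that last point, one small correction to your chain: \( \E\bigl[ e^{g}/Z \bigr] = \E_{0}\bigl[ e^{g} \mathbf{1}(Z>0) \bigr] \), which equals \( \E_{0} e^{g} \) only when \( \P_{0}(Z>0)=1 \); in general it is only \( \le \E_{0} e^{g} \). Since that is the direction you need, the proof stands, but the step should be stated as an inequality rather than an equality.
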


\begin{proof}
By simple algebra one can check that for any fixed \( y \) the maximum of the function
\( f(x)=xy - x \log x + x \) is attained at \( x = e^{y} \) leading to the
inequality
\( xy \le x \log x - x + e^{y} \).
Using this inequality and the representation
\( \E \log ( 1 + \zeta )
= \E_{0} \{ Z \log ( 1 + \zeta ) \}  \)
with \( Z = d\P/d\P_{0} \) we obtain
\begin{eqnarray*}
    \E \log ( 1 + \zeta )
    &=&
    \E_{0} \{ Z \log ( 1 + \zeta ) \}
     \le 
    \E_{0} ( Z \log Z - Z )
    +
    \E_{0} (1 + \zeta)
    \nn
    &=&
    \E_{0} ( Z \log Z ) + \E_{0} \zeta
    - \E_{0} Z + 1.
\end{eqnarray*}
It remains to note that \( \E_{0} Z = 1 \) and
\( \E_{0} \bigl( Z \log Z \bigr) = \E \log Z  \).
\end{proof}

This lemma applied with
\( \zeta = \losst(\hat{\thetav},\thetav) / \E_{\thetav}\losst(\hat{\thetav},\thetav) \)
yields the result of the theorem in view of 
\begin{eqnarray*}
    \E_{\thetav} \bigl( Z_{I,\thetav} \log Z_{I,\thetav} \bigr)
    &=&
    \E \log Z_{I,\thetav}
    =
    \E \sum_{t \in I} \log  \frac{p[Y_{t},g(X_{t})]}{p[Y_{t},g(X_{t}(\thetav))]}
    \nn
    &=&
    \E \sum_{t \in I}
     \E \Bigl\{ \log  \frac{p[Y_{t},g(X_{t})]}{p[Y_{t},g(X_{t}(\thetav))]} \Big| \cc{F}_{t-1} \Bigr\}
    =
    \E \Delta_{I_{k}}(\thetav). \hspace{33mm} \Box
\end{eqnarray*}

\begin{proof}[Proof of Corollary \ref{CSMBGARCH}]
It is Theorem \ref{Triskbound} formulated for
\( \losst(\thetav',\thetav) = L_{I} (\thetav', \thetav) \).
\end{proof}

\begin{proof}[Proof of Theorem \ref{TpropGARCH}] The first inequality follows from 
Corollary \ref{CSMBGARCH}, the second one from condition (\ref{defcv}) and 
the property \( x \ge \log x \) for \( x>0 \).
\end{proof}

\begin{proof}[Proof of Theorem \ref{TstabGARCH}]
Let \( \hat{k} = k > \ko \). This means that
\( I_{k} \) is not rejected as homogenous. 
Next, we show that for every 
\( k > \ko \) the inequality \( T_{I_{k},\tau} \le T_{I_{k},\Tcp(I_{k})} \le \zz_{k} \)
with \( \tcp = \TT - m_{\ko} = \TT - |I_{\ko}| \) implies 
\( L_{I_{\ko}}(\tilde{\thetav}_{I_{\ko}},\tilde{\thetav}_{I_{k}}) \le \zz_{\ko} \).
Indeed with \( J = I_{k}\setminus I_{\ko} \), this means that, by construction,
\( \zz_{k} \le \zz_{\ko} \) for \( k > \ko \) and 
\begin{eqnarray*}
    \zz_{k}
    \ge 
    T_{I_{k},\tcp}
    =
    L_{I_{\ko}}(\tilde{\thetav}_{I_{\ko}},\tilde{\thetav}_{I_{k}})
    +
    L_{J}(\tilde{\thetav}_{J},\tilde{\thetav}_{I_{k}})
    \ge 
    L_{I_{\ko}}(\tilde{\thetav}_{I_{\ko}},\tilde{\thetav}_{I_{k}}).
\end{eqnarray*}
It remains to note that 
\begin{eqnarray*}
    \bigl| L_{I_{\ko}}(\tilde{\thetav}_{I_{\ko}}, \hat{\thetav}) \bigr|^{r}
    \le 
    \bigl| L_{I_{\ko}}(\tilde{\thetav}_{I_{\ko}}, \hat{\thetav}_{I_{\ko}}) \bigr|^{r}
    \bb{1}(\hat{k} < \ko)
    +
    \zz_{\ko}^{r} \bb{1}(\hat{k} > \ko),
\end{eqnarray*}    
which obviously yields the assertion.
\end{proof}

\end{document}